\tikzstyle{tre}=[circle,draw,minimum size=3mm,inner sep=1pt]
\tikzstyle{trepp}=[circle,draw,minimum size=1.5mm,inner sep=0.1pt]
\tikzstyle{treppp}=[circle,draw,minimum size=1mm,inner sep=0pt]
\renewcommand{\leq}{\leqslant}
\renewcommand{\geq}{\geqslant}
\newcommand{\BT}{\mathcal{T}\!}
\newcommand{\TT}{\mathcal{T}\!}
\newcommand{\NN}{\mathbb{N}}
\newcommand{\RR}{\mathbb{R}}
\newcommand{\AAA}{\alpha}
\newcommand{\bal}{\mathrm{bal}}
\newcommand{\QC}{\mathit{QC}}
\theoremstyle{plain}
\newtheorem{theorem}{Theorem}
\newtheorem{lemma}[theorem]{Lemma}
\newtheorem{claim}{Claim}
\theoremstyle{definition}
\begin{document}
\begin{frontmatter}
\title{Squaring within the Colless index yields a better balance index}
\author[uib1]{Tom\'as M. Coronado}
\ead{t.martinez@uib.eu}
\author[uib1]{Arnau Mir}
\ead{arnau.mir@uib.eu}
\author[uib1]{Francesc Rossell\'o}
\ead{cesc.rossello@uib.es}
\address[uib1]{Dept. of Mathematics and Computer Science, University of the Balearic Islands, E-07122 Palma, Spain, and Balearic Islands Health Research Institute (IdISBa), E-07010 Palma, Spain}

\begin{abstract}
The Colless index for bifurcating phylogenetic trees, introduced by Colless \cite{Colless82}, is defined as the sum, over all internal nodes $v$ of the tree, of the absolute value of the difference of the sizes of the clades defined by the children of $v$. It  is one of the most popular phylogenetic balance indices, because, in addition to measuring the balance of a tree in a very simple and intuitive way, it turns out to be one of the most powerful and discriminating phylogenetic shape indices. But it has some drawbacks. On the one hand, although its minimum value is reached at the so-called maximally balanced trees, it is almost always  reached also at trees that are not maximally balanced. On the other hand, its definition as a sum of absolute values of differences makes it difficult to study analytically its  distribution  under probabilistic models of bifurcating phylogenetic trees. In this paper we show that if we replace in its definition the absolute values of the differences of clade sizes by the squares of these differences, all these drawbacks are overcome and the resulting index is still more powerful and discriminating than the original Colless index.
\end{abstract}
\begin{keyword}
Phylogenetic tree, Balance index, Colless index, Yule model, uniform model
\end{keyword}
\end{frontmatter}

\section{Introduction}
Evolutionary biology is concerned, among other major things, about understanding what forces influence speciation and extinction processes, and how they affect macroevolution \citep{Futuyma}. In order to do so, there has been a natural interest in the development of techniques and measures whose goal is to assess the imprint of these forces in what has become the standard representation of joint evolutionary histories of groups of species: phylogenetic trees \citep{Kubo95,Mooers1997,Stich09}. There are two aspects of a phylogenetic tree that can expose such an imprint: its branch lengths ---determined by the timing of speciation events---, and its \emph{shape}, or \emph{topology} ---which, in turn, is determined by the differences in the diversification rates among clades \citep[Chap. 33]{fel:04}. But, as it turns out, the accurate reconstruction of branch lengths associating, to a given phylogenetic tree, a robust timeline is not straightforward \citep{Drummond} while, on the other hand, phylogenetic reconstruction methods over the same empirical data tend to agree on the topology of the reconstructed tree \citep{BrowerRindal13,Hillis92,RindalBrower11}. Therefore, the shape of phylogenetic trees has become the focus of most of the studies performed on this topic, be it \textit{via} the definition of indices quantifying topological features ---see,  for instance,  \citep{Fusco95,Mooers1997,Shao:90} and the references on balance indices given below--- or the frequency distribution of small rooted subtrees \citep{cherries,Savage,Slowinski90,Wu15}.

In his 1922 paper, Yule \citep{Yule} first observed that taxonomic trees have a tendency towards asymmetry, with most clades being small and only a few of them large at every taxonomic level. Thus, \emph{balance}, understood as the propensity of the children of any given node to have the same number of descendant leaves, has become the most popular topological measure used to describe the topology of a phylogenetic tree. Therefore, \textit{per negationem}, the \emph{imbalance} of a phylogenetic tree gives a measure of the tendency of diversification events to occur mostly along specific lineages \citep{Nelson,Shao:90}. Several such measures have been proposed, in order to quantify the balance (or, in many cases, the imbalance) of a phylogenetic tree, and they are referred to in the literature as \emph{balance indices}. For instance, see \citep{Colless82,CMR,Fischer2015,Fusco95,KiSl:93,cherries,Mir2013,Mir2018,Sackin:72,Shao:90} and the section ``Measures of overall asymmetry'' in \citep{fel:04} (pp.  562--563).

For instance, these indices have then been thoroughly used in order to test the validity evolutionary models \citep{Aldous01,Blum2005,duchene2018,KiSl:93,Mooers1997,Purvis1996,Verboom2019};  to assess possible biases in the distribution of shapes that are obtained through different phylogenetic tree reconstruction methods \citep{Colless1995,Farris98,Holton2014,Sober93,Stam02}; to compare tree shapes \citep{Avino18,Goloboff17,Kayondo}; as a tool to discriminate between input parameters in phylogenetic tree simulations 
\citep{Poon2015,Saulnier16}; or simply to describe phylogenies existing in the literature \citep{chalmandrier2018,Cunha2019,metzig2019,Purvis11}.

Introduced in \citep{Colless82}, the \emph{Colless index} has become one of the most popular balance indices in the literature. Given a bifurcating tree $T$, it is defined as the sum, over all internal nodes $v$ in $T$, of the absolute value of the difference between the numbers of descendant leaves of the pair of children of $v$ (even so, there exists a recent extension to multifurcating trees, see \citep{Mir2018}). Its popularity springs from several sources. First of all, its antiquity: it is one of the first balance indices found in the literature, dating back to 1982. Secondly, the way it measures the ``global imbalance'' by adding the ``local imbalances'' of each internal node in $T$ is fairly intuitive. Finally, it has been classified as one of the most powerful tree shape statistics in goodness-of-fit tests of probabilistic models of phylogenetic trees \citep{Agapow02,KiSl:93,Matsen06}, as well as one of the most shape-discriminant balance indices \citep{hayati}.

Due to this popularity, the statistical properties of the Colless index under several probabilistic models have been thoroughly studied \citep{Blum2006a,CMR13,Ford,Heard1992} as well as its maximum \citep{Mir2018} and minimum \citep{MinColless} values. The characterization of this last value, as well as that of the trees attaining it, apart from recent turns out to be rather complex and fails to shed light on the intuitive concept of balance. Indeed, other balance indices, such as the total cophenetic index \citep{MRR} and the rooted quartet index \citep{CMR} classify as ``most balanced'' trees only those that are maximally balanced, in the sense that the imbalance of each internal node is either $0$ or $1$. Even though these trees are effectively considered to be ``most balanced'' by the Colless index, they are seldom the only ones being so considered.

In this manuscript, we introduce a modification of the Colless index that offers some benefits over the original definition, consisting in squaring the difference of the number of descendant leaves to each child of an internal node instead of considering its absolute value. On the one hand, we have been able to compute both its expected value and its variance under the Yule and uniform probabilistic models for phylogenetic trees. In contrast, notice that the expected value of the Colless index under the uniform model is still unknown in the literature. On the other hand, its maximum and minimum values are attained exactly at the caterpillars and the maximally balanced trees, respectively, and the proofs of these results are rather easy ---more so when compared to those concerning the Colless index. Furthermore, it proves to be less prone to have ties between different trees than any other balance index in the literature is, as well as more shape-discriminant than any of the balance indices tested in \citep{hayati} are.

Before leaving the Introduction, we want to note that, even though the Colless index, as well as other indices, was invented for its application to the description and analysis of phylogenetic trees, it is a shape index, i.e. one whose value does not depend on the specific labels associated to the leaves of the tree, but on its underlying topological features. Thus, in the rest of this manuscript we will restrict ourselves to unlabeled trees.

\section{Preliminaries}

\subsection{Trees}

In this paper, by a \emph{tree} $T$ we always mean a \emph{bifurcating rooted tree}, that is, a directed tree with one, and only one, node of in-degree 0 (called the \emph{root} of the tree) and all its nodes of out-degree either 0 (the \emph{leaves}, forming the set $L(T)$) or 2 (the \emph{internal nodes}, forming the set $V_{\mathit{int}}(T)$).
For every $n\geq 1$,  we denote by $\mathcal{T}^*_n$  the set of (isomorphism classes of) trees with $n$ leaves. 

Let $T$ be a tree. If there exists an edge from a node $u$ to a node $v$  in $T$, we say that $v$ is a  \emph{child} of $u$ and that $u$ is the \emph{parent} of $v$. Notice that, since $T$ is bifurcating,  all internal nodes of $T$ have exactly two children. In addition, if there exists a path from a node $u$ to a node $v$  in $T$, we say that $v$ is a \emph{descendant} of $u$. For every node $v$ of $T$, we denote by $\kappa_T(v)$ the number of its  descendant leaves. If $n\geq 2$,  the \emph{maximal pending subtrees} of $T$ are the pair of subtrees rooted at the children of its root. We shall denote the fact that $T_1$ and $T_2$ are the  maximal pending subtrees of $T$ by writing $T=T_1\star T_2$. This notation is commutative, that is 
$T_1\star T_2=T_2\star T_1$.

For every $n\geq 1$, the \emph{comb} with $n$ leaves, $K_n$, is the unique tree in $\TT_n$ all whose internal nodes have different numbers of descendant leaves; cf. Figure \ref{treetop}.(a).

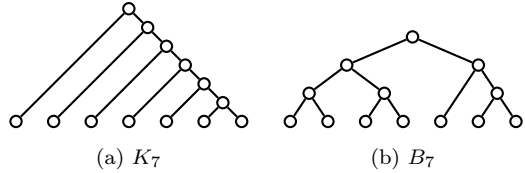
\begin{figure}[htb]
\begin{center}
\begin{tikzpicture}[thick,>=stealth,scale=0.25]
\draw(0,0) node [trepp] (1) {};
\draw(2,0) node [trepp] (2) {};
\draw(4,0) node [trepp] (3) {};
\draw(6,0) node [trepp] (4) {};
\draw(8,0) node [trepp] (5) {}; 
\draw(10,0) node [trepp] (6) {};
\draw(12,0) node [trepp] (7) {};
\draw(11,1) node[trepp] (a) {};
\draw(10,2) node[trepp] (b) {};
\draw(9,3) node[trepp] (c) {};
\draw(8,4) node[trepp] (d) {};
\draw(7,5) node[trepp] (e) {};
\draw(6,6) node[trepp] (r) {};
\draw  (a)--(6);
\draw  (a)--(7);
\draw  (b)--(a);
\draw  (b)--(5);
\draw  (c)--(b);
\draw  (c)--(4);
\draw  (d)--(3);
\draw  (d)--(c);
\draw  (e)--(d);
\draw  (e)--(2);
\draw  (r)--(e);
\draw  (r)--(1);
\draw(6,-2) node {\footnotesize (a) $K_7$};
\end{tikzpicture}
\quad
\begin{tikzpicture}[thick,>=stealth,scale=0.25]
\draw(0,0) node [trepp] (1) {};
\draw(2,0) node [trepp] (2) {};
\draw(4,0) node [trepp] (3) {};
\draw(6,0) node [trepp] (4) {};
\draw(8,0) node [trepp] (5) {}; 
\draw(10,0) node [trepp] (6) {};
\draw(12,0) node [trepp] (7) {};
\draw(1,1.5) node[trepp] (a) {};
\draw(5,1.5) node[trepp] (b) {};
\draw(3,3) node[trepp] (c) {};
\draw(11,1.5) node[trepp] (d) {};
\draw(10,3) node[trepp] (e) {};
\draw(6.5,4.5) node[trepp] (r) {};
\draw  (r)--(c);
\draw  (r)--(e);
\draw  (c)--(a);
\draw  (c)--(b);
\draw  (a)--(1);
\draw  (a)--(2);
\draw  (b)--(3);
\draw  (b)--(4);
\draw  (e)--(d);
\draw  (e)--(5);
\draw  (d)--(6);
\draw  (d)--(7);
\draw(6,-2) node {\footnotesize (b) $B_7$};
\end{tikzpicture}

\end{center}
\caption{\label{treetop} (a) The comb $K_7$ with 7 leaves; (b) The maximally balanced tree $B_7$ with 7 leaves.}
\end{figure}

\subsection{The Colless index and the maximally balanced trees}

Given a tree $T$ and an internal node $v \in V_{\mathit{int}}(T)$ with children $v_1$ and $v_2$, the \emph{balance value} of $v$ is $bal_T(v) = |\kappa_T(v_1)-\kappa_T(v_2)|$. The \emph{Colless index} \cite{Colless82} of a tree $T\in \TT_n$ is the sum of the balance values of its internal nodes:
$$
C(T) = \sum\limits_{v \in V_{\mathit{int}}(T)}  bal_T(v).
$$

An internal node $v$ is \emph{balanced} when $bal_T(v) \leq 1$, i.e. when its two children have $\lceil \kappa_T(v)/2 \rceil$ and $\lfloor \kappa_T(v)/2 \rfloor$ descendant leaves, respectively. 
A tree  is \emph{maximally balanced} if all its internal nodes are balanced (cf. Figure \ref{treetop}.(b)). Recursively, a  bifurcating tree is maximally balanced if its root is balanced and  its two maximal pending subtrees are maximally balanced. This easily implies that, for every $n \in \mathbb{N}$, there exists a unique maximally balanced tree with $n$ leaves, which we denote by $B_n$.  

The maximum Colless index in $\TT_n^*$ is reached exactly at the \emph{comb} $K_n$. The fact that $C(K_n)$ is maximum  was already hinted at by Colless in \cite{Colless82}, but to our knowledge a formal proof that $C(K_n)>C(T)$ for every  $T\in \TT_n^*\setminus\{K_n\}$ was not provided until \cite[Lem. 1] {Mir2018}. As to the  minimum Colless index in $\TT_n^*$, it is proved in \cite[Thm. 1]{MinColless} that it is achieved at the maximally balanced tree $B_n$,
although (unlike the situation with the maximum Colless index) for almost every $n\in \NN_{\geq 1}$ there exist other trees in $\TT^*_n$ with minimum Colless index (see \cite[Cor. 7]{MinColless}). 
If we write $n=\sum_{j=1}^\ell 2^{m_j}$, with $\ell\geq 1$ and $m_1,\ldots,m_\ell\in \NN$ such that $m_1>\cdots>m_\ell$, then 
\begin{equation} 
C(B_n)= \sum_{j=2}^\ell 2^{m_j}(m_1 - m_j - 2(j-2)).
\label{eqn:CBn}
\end{equation} 
For a proof, see Thm. 2 in \cite{MinColless}.

\subsection{Phylogenetic trees}

A  \emph{phylogenetic tree} on a set $X$ is a (rooted and bifurcating) tree with its leaves bijectively labeled by the elements of $X$.  We shall denote by $\TT_X$ the space of (isomorphism classes of) phylogenetic trees on $X$. When the specific set of labels $X$ is irrelevant and only its cardinality $|X|=n$ matters, we shall identify $X$ with the set $\{1,\ldots,n\}$, we shall write $\TT_n$ instead of $\TT_X$, and we shall call the members of this set  \emph{phylogenetic trees with $n$ leaves}.   


A \emph{probabilistic model of phylogenetic trees} $P_{n}$, $n \geq 1$,  is a family of probability mappings $P_n:\TT_n\to [0,1]$, each one sending each phylogenetic tree in $\TT_n$ to its probability under this model. 

The two most popular probabilistic models of phylogenetic trees are the \emph{Yule}, or  \emph{Equal-Rate Markov},  \emph{model} \cite{Harding71,Yule} and the \emph{uniform}, or \emph{Proportional to Distinguishable Arrangements}, \emph{model} \cite{CS,Rosen78}. 
The \emph{Yule model} produces bifurcating phylogenetic trees on $[n]$ through the following stochastic process: starting with a single node,  at every step a leaf is chosen randomly and uniformly and it is replaced by a pair of sister leaves; when the desired number $n$ of leaves is reached, the labels are assigned randomly and uniformly to these leaves. The probability $P_{Y,n}(T)$ of each $T\in \TT_n$ under this model is the probability of being obtained through this process. As to the \emph{uniform model}, it assigns the same probability to all trees $T\in\TT_n$, which is then $P_{U,n}=1/(2n-3)!!$. For more information on these two models, see \cite[\S 3.2]{Steel:16}.

\section{Main theoretical results}

The \emph{Quadratic Colless index}, \emph{Q-Colless index} for short, of a  bifurcating tree $T$ is the sum of the squared balance values of its internal nodes:
$$
\QC(T) = \sum\limits_{v \in V_{\mathit{int}}(T)}  bal_T(v)^2 = \sum\limits_{v \in V_{\mathit{int}}(T)} (\kappa_T(v_1) - \kappa_T(v_2) )^2,
$$
where $v_1$ and $v_2$ denote the children of each $v \in V_{\mathit{int}}(T)$.

For instance, the trees depicted in Figure \ref{treetop} have Q-Colless indices  $\QC(K_7)= 55$ and  $\QC(B_7)=2$. As we shall see, these are the maximum and minimum values of $\QC$ on $\TT_7$.

It is straightforward to check that the Q-Colless index satisfies the following recurrence; cf.  \cite{Rogers1993} for the corresponding recurrence for the ``classical'' Colless index.

\begin{lemma}\label{lem:rec}
For every $T\in \TT^*_n$, with $n\geq 2$, if $T=T_k\star T'_{n-k}$, with $T_k\in \TT^*_k$ and $T'_{n-k}\in \TT^*_{n-k}$, then
$$
\QC(T)=\QC(T_k)+\QC(T'_{n-k})+(n-2k)^2.
$$
\end{lemma}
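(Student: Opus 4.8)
The plan is to unwind the definition of $\QC$ using the decomposition $T = T_k \star T'_{n-k}$ in a single step; no induction is needed. Write $\rho$ for the root of $T$ and $\rho_1,\rho_2$ for its two children, which are precisely the roots of $T_k$ and of $T'_{n-k}$. Since every internal node of $T$ other than $\rho$ lies in exactly one of the two maximal pending subtrees, the internal-node set splits as a disjoint union
$$
V_{\mathit{int}}(T) = \{\rho\} \sqcup V_{\mathit{int}}(T_k) \sqcup V_{\mathit{int}}(T'_{n-k}),
$$
and therefore the defining sum $\QC(T) = \sum_{v \in V_{\mathit{int}}(T)} bal_T(v)^2$ breaks into three corresponding pieces.

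Next I would check that balance values are intrinsic to the pending subtrees. If $v$ is a node of $T_k$, then every leaf of $T$ that descends from $v$ already lies in $T_k$, so $\kappa_T(v) = \kappa_{T_k}(v)$, and the same equality holds for each of the two children of $v$; hence $bal_T(v) = bal_{T_k}(v)$. Symmetrically, $bal_T(v) = bal_{T'_{n-k}}(v)$ for every $v$ in $T'_{n-k}$. Consequently the second and third pieces of the sum are exactly $\QC(T_k)$ and $\QC(T'_{n-k})$.

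Finally, the contribution of $\rho$ itself is $bal_T(\rho)^2 = (\kappa_T(\rho_1) - \kappa_T(\rho_2))^2 = (k - (n-k))^2 = (n-2k)^2$, since $\kappa_T(\rho_1)$ and $\kappa_T(\rho_2)$ are the numbers of leaves of $T_k$ and $T'_{n-k}$, namely $k$ and $n-k$. Adding the three contributions gives the claimed identity. I do not expect any genuine obstacle: the only formal points worth a word are that, since $n\geq 2$, we have $k\geq 1$ and $n-k\geq 1$, so the decomposition $T = T_k\star T'_{n-k}$ is legitimate, and that when $T_k$ (or $T'_{n-k}$) is a single leaf it has no internal nodes, making $\QC$ of it the empty sum, equal to $0$, in agreement with the formula.
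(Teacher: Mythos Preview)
Your proof is correct and is exactly the ``straightforward check'' the paper alludes to: split $V_{\mathit{int}}(T)$ as $\{\rho\}\sqcup V_{\mathit{int}}(T_k)\sqcup V_{\mathit{int}}(T'_{n-k})$, observe that balance values are computed within the pending subtrees, and evaluate the root's contribution as $(k-(n-k))^2$. The paper does not spell out any argument beyond this, so there is nothing to compare.
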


The Colless index and the Q-Colless index satisfy the following relation.

\begin{lemma}\label{lem:CvsQC}
For every $T\in\BT^*_n$, $\QC(T) \geq C(T)$ and the equality holds if and only if $T$ is maximally balanced.
\end{lemma}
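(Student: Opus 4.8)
The plan is to prove both statements by working node by node. For each internal node $v$ with children $v_1, v_2$, the contribution to $\QC(T)$ is $\bal_T(v)^2$ and the contribution to $C(T)$ is $\bal_T(v)$; since $\bal_T(v) = |\kappa_T(v_1) - \kappa_T(v_2)|$ is a nonnegative integer, we always have $\bal_T(v)^2 \geq \bal_T(v)$, with equality precisely when $\bal_T(v) \in \{0,1\}$, i.e. precisely when $v$ is balanced. Summing over all $v \in V_{\mathit{int}}(T)$ immediately gives $\QC(T) \geq C(T)$.

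For the equality case, the observation above shows that $\QC(T) = C(T)$ holds if and only if every internal node $v$ satisfies $\bal_T(v)^2 = \bal_T(v)$, which by the same elementary fact happens if and only if $\bal_T(v) \leq 1$ for every $v$, i.e. every internal node is balanced. By the definition of maximally balanced trees recalled in the preliminaries, this is exactly the statement that $T$ is maximally balanced. So the equivalence is immediate once the termwise comparison is set up.

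I would write this up as a single short argument: first note $k^2 \geq k$ for all $k \in \NN$ with equality iff $k \in \{0,1\}$, then apply it to $k = \bal_T(v)$ for each internal node and sum. I do not anticipate any real obstacle here; the only thing to be careful about is phrasing the equality case cleanly, making sure to invoke the (already stated) fact that a tree is maximally balanced exactly when all its internal nodes are balanced, so that "$\bal_T(v) \leq 1$ for all $v$" and "maximally balanced" can be identified without further comment. Optionally one could instead run the argument recursively using Lemma~\ref{lem:rec} together with the recursive characterisation of maximally balanced trees, but the direct termwise comparison is simpler and I would prefer it.
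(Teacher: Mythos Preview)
Your proposal is correct and matches the paper's proof essentially verbatim: the paper also compares termwise, using $\bal_T(u)\in\NN$ to get $\bal_T(u)^2\geq \bal_T(u)$ with equality iff $\bal_T(u)\in\{0,1\}$, and then identifies the equality case with the definition of maximally balanced trees.
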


\begin{proof}
By definition,
$$
\QC(T) = \sum_{u\in V_{int}(T)} \bal_T(u)^2 \geq \sum_{u\in V_{int}(T)} \bal_T(u) = C(T)
$$
because $\bal_T(u)\in\NN$ for all $u\in V_{int}(T)$. This inequality is an equality if, and only if, each $\bal_T(u)$ is either 0 or 1, and, by definition, this only happens in the maximally balanced trees.
\end{proof}

\subsection{Extremal values}

In this subsection we prove that, according to the Q-Colless index, the most balanced trees are exactly the maximally balanced trees  and  the most unbalanced trees are exactly the combs.

\begin{theorem}\label{thm:min}
The minimum of the  Q-Colless index on $\BT^*_n$ is always reached at the maximally balanced tree $B_n$, and only at this tree. Moreover, $\QC(B_n)=C(B_n)$ and hence this minimum value is given by Eqn. (\ref{eqn:CBn}).
\end{theorem}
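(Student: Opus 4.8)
The plan is to prove the statement by induction on $n$, using the recurrence of Lemma~\ref{lem:rec} together with Lemma~\ref{lem:CvsQC}. The base cases $n=1,2$ are trivial (there is a unique tree, which is $B_n$). For the inductive step, suppose the claim holds for all numbers of leaves smaller than $n$, and let $T\in\BT^*_n$ attain the minimum of $\QC$ on $\BT^*_n$. Write $T=T_k\star T'_{n-k}$ with $1\le k\le n-k$; by Lemma~\ref{lem:rec},
$$
\QC(T)=\QC(T_k)+\QC(T'_{n-k})+(n-2k)^2.
$$

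First I would argue that, at a minimizer, the two maximal pending subtrees must themselves be minimizers of $\QC$ on $\BT^*_k$ and $\BT^*_{n-k}$ respectively — otherwise we could replace one of them by $B_k$ (resp.\ $B_{n-k}$) and strictly decrease $\QC(T)$, contradicting minimality. By the induction hypothesis this forces $T_k=B_k$ and $T'_{n-k}=B_{n-k}$, so $\QC(T)=\QC(B_k)+\QC(B_{n-k})+(n-2k)^2$. It then remains to determine which split $k$ minimizes the function
$$
f(k)=\QC(B_k)+\QC(B_{n-k})+(n-2k)^2,\qquad 1\le k\le \lfloor n/2\rfloor,
$$
and to check that it is $k=\lfloor n/2\rfloor$, i.e.\ the balanced split, which yields precisely $B_n$. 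This is the heart of the argument. I expect the cleanest route is again recursive: using $\QC(B_m)=C(B_m)$ (Lemma~\ref{lem:CvsQC}, since $B_m$ is maximally balanced) and known monotonicity/convexity-type properties of $C(B_m)$ in $m$, show that moving $k$ from any value toward $\lfloor n/2\rfloor$ does not increase $f(k)$; the quadratic term $(n-2k)^2$ is itself minimized at the balanced split and is strictly decreasing in $k$ on $[1,\lfloor n/2\rfloor]$, so one only needs that $\QC(B_k)+\QC(B_{n-k})$ does not grow faster than $(n-2k)^2$ shrinks as $k$ increases. Uniqueness of the minimizer follows because the quadratic term is \emph{strictly} decreasing in $k$ on this range, so any $k<\lfloor n/2\rfloor$ gives a strictly larger value (after checking that the $\QC(B_k)+\QC(B_{n-k})$ part is non-decreasing, or at worst controlling the one borderline case $n\equiv 2\pmod 4$ where $\lfloor n/2\rfloor$ and $\lfloor n/2\rfloor$ coincide anyway).

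The main obstacle is the discrete optimization over $k$: establishing that $\QC(B_k)+\QC(B_{n-k})$ is non-decreasing in $k$ on $[1,\lfloor n/2\rfloor]$ — or, more precisely, that the full sum $f(k)$ is non-increasing — requires handling the somewhat irregular behaviour of $C(B_m)$ coming from the binary expansion of $m$ in Eqn.~(\ref{eqn:CBn}). I would try to avoid a direct manipulation of that closed form and instead prove the needed inequality $f(k)\ge f(k+1)$ for $k<\lfloor n/2\rfloor$ by a purely combinatorial/tree-surgery argument: given a minimal tree with an unbalanced split, exhibit an explicit local modification (moving a suitable pending subtree across the root) that strictly decreases $\QC$, using Lemma~\ref{lem:rec} twice to bookkeep the change. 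Once the root split is forced to be balanced and both pending subtrees are forced (by induction) to be the maximally balanced trees $B_{\lceil n/2\rceil}$ and $B_{\lfloor n/2\rfloor}$, the recursive definition of $B_n$ gives $T=B_n$, and the final assertion $\QC(B_n)=C(B_n)$ together with the value from Eqn.~(\ref{eqn:CBn}) is immediate from Lemma~\ref{lem:CvsQC}.
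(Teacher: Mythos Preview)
Your approach is genuinely different from the paper's, and it has a real gap at the step you yourself flag as ``the main obstacle''.

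The paper's proof is three lines: by \cite[Thm.~1]{MinColless} one has $C(T)\geq C(B_n)$ for every $T\in\BT^*_n$, and by Lemma~\ref{lem:CvsQC} one has $\QC(T)\geq C(T)$ with equality iff $T$ is maximally balanced. Chaining these gives $\QC(T)\geq C(T)\geq C(B_n)=\QC(B_n)$, and uniqueness follows because the first inequality is strict whenever $T\neq B_n$. No induction, no split optimization.

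Your inductive scheme is sound up to the point where you must show that
\[
f(k)=\QC(B_k)+\QC(B_{n-k})+(n-2k)^2
\]
is uniquely minimized at $k=\lfloor n/2\rfloor$. The route you propose for this --- controlling $f$ via monotonicity of $\QC(B_k)+\QC(B_{n-k})$ --- does not work: that sum is \emph{not} monotone in $k$. For instance, with $n=12$ one gets $C(B_4)+C(B_8)=0$ but $C(B_6)+C(B_6)=4$, so the balanced split does not even minimize the non-quadratic part; and with $n=16$ the sequence $(C(B_k)+C(B_{16-k}))_{k=3}^{8}$ reads $6,4,7,6,5,0$, oscillating. The quadratic term happens to dominate in all cases, but proving that requires a quantitative bound on $|C(B_{m+1})-C(B_m)|$ (which is of order $\log_2 m$, tied to the binary expansion), not a monotonicity statement. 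Your tree-surgery alternative is plausible but is essentially a reproof of \cite[Thm.~1]{MinColless} for $\QC$, and you do not carry it out.

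If you want to rescue the inductive argument with minimal extra work, note that for $k<\lfloor n/2\rfloor$ one has $n-2k\geq 2$, and combining the Colless recurrence with \cite[Thm.~1]{MinColless} gives
\[
f(k)=C(B_k\star B_{n-k})+(n-2k)^2-(n-2k)\geq C(B_n)+(n-2k)(n-2k-1)\geq C(B_n)+2>\QC(B_n),
\]
which is exactly the strict inequality you need. But once you are invoking \cite[Thm.~1]{MinColless} anyway, the paper's direct argument is shorter and cleaner than the induction.
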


\begin{proof}
Let  $T\in\BT^*_n$. By \cite[Thm. 1]{MinColless}, we know
that $C(T)\geq C(B_n)$. Therefore, by Lemma \ref{lem:CvsQC},
$$
\QC(T) \geq C(T) \geq C(B_n) = \QC(B_n)
$$
and therefore $\QC(B_n)$ is minimum on $\BT^*_n$. Furthermore, the first inequality is strict if $T \neq B_n$, and therefore $\QC(T)> \QC(B_n)$ if $T\neq B_n$.
\end{proof}

\begin{theorem}\label{thm:max}
The maximum of the  Q-Colless index on $\BT^*_n$ is always reached at the comb $K_n$, and only at this tree, and it is equal to.
$$
\QC(K_n) = \binom{n}{3}+\binom{n-1}{3}.
$$
\end{theorem}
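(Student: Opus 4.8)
The plan is to prove the statement by strong induction on $n$, exploiting the recurrence of Lemma~\ref{lem:rec}. The base cases $n=1,2$ are immediate, since $\TT^*_n$ is then a singleton and $\binom{n}{3}+\binom{n-1}{3}=0=\QC(K_n)$. For the inductive step, fix $T\in\TT^*_n$ with $n\geq 3$ and write $T=T_k\star T'_{n-k}$ with $1\leq k\leq n-1$; by Lemma~\ref{lem:rec} and the induction hypothesis applied to the two maximal pending subtrees,
$$
\QC(T)=\QC(T_k)+\QC(T'_{n-k})+(n-2k)^2\leq \QC(K_k)+\QC(K_{n-k})+(n-2k)^2,
$$
with equality only when $T_k=K_k$ and $T'_{n-k}=K_{n-k}$. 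So the problem reduces to maximising $g(k):=\QC(K_k)+\QC(K_{n-k})+(n-2k)^2$ over $k\in\{1,\dots,n-1\}$.

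To carry this out I would first record the closed form for $\QC(K_n)$: the internal nodes of the comb have clade sizes $2,3,\dots,n$, so their balance values are $0,1,\dots,n-2$, whence $\QC(K_n)=\sum_{j=0}^{n-2}j^2=\frac{(n-2)(n-1)(2n-3)}{6}$, which a one-line rearrangement identifies with $\binom{n}{3}+\binom{n-1}{3}$ — this already gives the asserted value. The key point is then the polynomial identity
$$
\QC(K_n)-g(k)=\QC(K_n)-\QC(K_k)-\QC(K_{n-k})-(n-2k)^2=(n+1)(k-1)(n-1-k),
$$
obtained by substituting the cubic closed form for $\QC(K_n)$, $\QC(K_k)$ and $\QC(K_{n-k})$ and expanding. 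For $1\leq k\leq n-1$ all three factors on the right are non-negative, so $g(k)\leq\QC(K_n)$, and the right-hand side vanishes precisely when $k=1$ or $k=n-1$.

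Putting the pieces together: for any $T\in\TT^*_n$ we get $\QC(T)\leq g(k)\leq\QC(K_n)$, which proves maximality. If equality holds throughout, then $(n+1)(k-1)(n-1-k)=0$, so $k\in\{1,n-1\}$; using the commutativity of $\star$ we may assume $k=1$, i.e. one maximal pending subtree is the leaf $K_1$, and then equality in the first inequality forces the other pending subtree, which has $n-1$ leaves, to attain $\QC(K_{n-1})$, hence to equal $K_{n-1}$ by the induction hypothesis; therefore $T=K_1\star K_{n-1}=K_n$. The only genuine work is the verification of the identity $\QC(K_n)-g(k)=(n+1)(k-1)(n-1-k)$ — a routine but slightly lengthy expansion of cubic polynomials in $n$ and $k$; once it is in hand, both the extremality and the uniqueness are immediate, so I expect that expansion to be the main (and essentially the only) obstacle.
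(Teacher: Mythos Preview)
Your proposal is correct and follows essentially the same approach as the paper: induction on $n$ via the recurrence of Lemma~\ref{lem:rec}, reducing the problem to maximising $g(k)=\QC(K_k)+\QC(K_{n-k})+(n-2k)^2$ over $k$. The only cosmetic difference is in how the maximisation of $g$ is handled: the paper observes that $g$ is a convex parabola in $k$ with vertex at $n/2$ (hence maximum at $k=1$ on $[1,n/2]$), whereas you compute the exact factorisation $\QC(K_n)-g(k)=(n+1)(k-1)(n-1-k)$, which is slightly cleaner and makes both the inequality and the equality cases immediate.
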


\begin{proof}
The formula for $\QC(K_n)$ comes from the fact that the balance values of the internal nodes of $K_n$ are $\{0,1,\ldots,n-2\}$ and therefore
$$
\QC(K_n) = \sum_{i=1}^{n-2} i^2=\frac{(n-1)(n-2)(2n-3)}{6}=\binom{n}{3}+\binom{n-1}{3}.
$$

We prove now the maximality assertion in the statement  by induction on the number $n$ of leaves. For $n \in \{1,2,3\}$, the assertion is obviously true because in these cases $\BT^*_n$ consists of a single tree. Assume now that $n\geq 4$ and that, for every $m<n$, $\QC(K_{m})>\QC(T_m)$ for every $T_m\in \BT^*_m\setminus\{K_m\}$. Let $T\in \BT^*_n$ and let $T_{n_1}$ and $T_{n-n_1}$ be its two maximal pending subtrees, with $T_{n_1}\in \BT^*_{n_1}$ and $T_{n-n_1}\in \BT^*_{n-n_1}$ and, say, $n_1\leq n/2$. In this way, by Lemma \ref{lem:rec},
$$
\QC(T)=\QC(T_{n_1}) + \QC(T_{n-n_1}) + (n - 2n_1)^2.
$$
We want to prove that $\QC(K_n)\geq \QC(T)$ and that the equality  holds only when $T=K_n=K_1\star K_{n-1}$. Since, by induction,
$\QC(K_{n_1})\geq \QC(T_{n_1})$ and $\QC(K_{n-n_1})\geq \QC(T_{n-n_1})$
and the corresponding equalities hold only when $T_{n_1}=K_{n_1}$ and
$T_{n-n_1}=K_{n-n_1}$, it is enough to prove that
$$
\QC(K_n) \geq  \QC(K_{n_1}) + \QC(K_{n-n_1}) + (n-2n_1)^2,
$$
i.e., that
\begin{align*} 
& \frac{(n-1)(n-2)(2n-3)}{6} \\
& \quad \geq \frac{(n_1-1)(n_1-2)(2n_1-3)}{6}+
\frac{(n-n_1-1)(n-n_1-2)(2n-2n_1-3)}{6}\\
& \hspace*{1cm}
+(n-2n_1)^2,
\end{align*}
for every $1\leq n_1\leq n/2$, and that the equality  holds only when $n_1=1$.

Consider now the function $\kappa:[1,n/2]\to\RR$, defined as
\begin{align*} 
\kappa(x)& = \frac{1}{6}\Big((x-1)(x-2)(2x-3) + (n-x-1)(n-x-2)(2n-2x-3)\\
&\hspace*{1cm} + 6(n-2x)^2\Big)\\
& = (n+1)x^2-n(n+1)x+\frac{1}{6}\big(2 n^3  - 3 n^2  + 13 n   - 12\big)
\end{align*}
The graph of this function is a convex parabola with vertex at  $x=n/2$.
Therefore, the maximum value of $\kappa$ on the interval $[1,n/2]$ is reached  at $x=1$,
which is exactly what we wanted to prove.
\end{proof}

By \cite[Cor. 5]{MinColless}, 
$$
\QC(B_n)=C(B_n)<\min\{n/2,2^{\lceil\log_2(n)\rceil}/3\}
$$
and therefore the range of values of $\QC$ on $\TT^*_n$ goes from below this bound to $\binom{n}{3}+\binom{n-1}{3}$ and hence its width grows in $n^3/3$, one order of magnitude larger than the range of the Colless index.

\subsection{Statistics under the uniform and the Yule model}

Let $\QC_n$ be the random variable that chooses a phylogenetic tree $T \in \TT_n$ and computes $\QC(T)$.

\begin{theorem}\label{thm:U}
For every $n\geq 1$:
\begin{enumerate}[(a)]
\item The expected value of $\QC_n$ under the uniform model is
$$
E_U(\QC_n)=\binom{n+1}{2}\cdot \frac{(2n-2)!!}{(2n-3)!!}-n(2n-1).
$$

\item The variance of $\QC_n$ under the uniform model is
\begin{align*}
& \sigma_U^2(\QC_n)  =\frac{2}{15}(2n-1) (7 n^2+9n-1)\binom{n+1}{2}\\
&\qquad
-\frac{1}{8}(5 n^2 + n + 2)\binom{n+1}{2} \frac{(2n-2)!!}{(2n-3)!!} 
-\binom{n+1}{2}^2\left(\frac{(2n-2)!!}{(2n-3)!!}\right)^2.
\end{align*}
\end{enumerate}
\end{theorem}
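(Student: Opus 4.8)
The plan is to compute the first two moments of $\QC_n$ under the uniform model by setting up linear recurrences that exploit the decomposition $T = T_k \star T'_{n-k}$ together with the recurrence $\QC(T)=\QC(T_k)+\QC(T'_{n-k})+(n-2k)^2$ from Lemma~\ref{lem:rec}. Under the uniform model on $\TT_n$, conditioning on the size $k$ of one distinguished maximal pending subtree, the relevant weighting is well known: the probability that the split is $\{k,n-k\}$ is proportional to $\binom{n}{k}(2k-3)!!(2n-2k-3)!!$, and conditionally the two pending subtrees are independent uniform phylogenetic trees on their leaf sets. I would first isolate the purely combinatorial sums
\[
a_n=\sum_{k} \binom{n}{k}(2k-3)!!(2n-2k-3)!! , \qquad b_n=\sum_{k}\binom{n}{k}(2k-3)!!(2n-2k-3)!!\,(n-2k)^2,
\]
where the sum over $k$ runs over the ways of splitting $n$ leaves into an ordered pair; here $a_n=(2n-3)!!\cdot(\text{number of ordered splits})$ is essentially a normalizing constant, and $b_n$ is the key new ingredient. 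Evaluating $b_n/a_n$ in closed form — this is the expected value of $(n-2k)^2$ for a uniform split — is the heart of part~(a): I expect it to come out as a combination of $\binom{n+1}{2}\frac{(2n-2)!!}{(2n-3)!!}$ and a polynomial in $n$, which after telescoping the recurrence $E_U(\QC_n)=E_U(\QC_k)+E_U(\QC_{n-k})+\langle(n-2k)^2\rangle$ produces the stated formula $\binom{n+1}{2}\frac{(2n-2)!!}{(2n-3)!!}-n(2n-1)$. The double-factorial ratio $\frac{(2n-2)!!}{(2n-3)!!}$ is the natural object to track because it satisfies $\frac{(2n)!!}{(2n-1)!!}=\frac{2n}{2n-1}\cdot\frac{(2n-2)!!}{(2n-3)!!}$, so the recurrence closes on quantities of this shape.

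For the variance in part~(b) the approach is the same but one order higher: writing $\QC(T)=\QC(T_k)+\QC(T'_{n-k})+s^2$ with $s=n-2k$, one has
\[
\QC(T)^2=\QC(T_k)^2+\QC(T'_{n-k})^2+s^4+2\QC(T_k)\QC(T'_{n-k})+2s^2\QC(T_k)+2s^2\QC(T'_{n-k}),
\]
and taking uniform expectations, using conditional independence of the two pending subtrees given the split, turns the cross term $\QC(T_k)\QC(T'_{n-k})$ into $E_U(\QC_k)E_U(\QC_{n-k})$, which is already known from part~(a). Thus $E_U(\QC_n^2)$ satisfies an inhomogeneous linear recurrence whose inhomogeneous part involves $\langle s^4\rangle$, $\langle s^2 E_U(\QC_k)\rangle$ and $\langle E_U(\QC_k)E_U(\QC_{n-k})\rangle$, all of which are (after the computations above) explicit. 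I would evaluate the new combinatorial moment $\sum_k \binom{n}{k}(2k-3)!!(2n-2k-3)!!\,(n-2k)^4$ divided by $a_n$, and then solve the recurrence — again the solutions live in the span of $\{1, n, n^2, n^3, \frac{(2n-2)!!}{(2n-3)!!}, n\frac{(2n-2)!!}{(2n-3)!!}, n^2\frac{(2n-2)!!}{(2n-3)!!}\}$ and possibly the square $\left(\frac{(2n-2)!!}{(2n-3)!!}\right)^2$, which is exactly the shape appearing in the claimed formula. Finally $\sigma_U^2(\QC_n)=E_U(\QC_n^2)-E_U(\QC_n)^2$, and the $-\binom{n+1}{2}^2\left(\frac{(2n-2)!!}{(2n-3)!!}\right)^2$ term is precisely the square of the leading part of $E_U(\QC_n)$, which confirms that the quadratic-in-double-factorial terms must cancel out of $E_U(\QC_n^2)$ except for this one.

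The main obstacle will be the evaluation of the split moments $\sum_k \binom{n}{k}(2k-3)!!(2n-2k-3)!!\,(n-2k)^{2j}$ for $j=1,2$ in closed form. The cleanest route is probably a generating-function identity: the double factorials $(2k-3)!!$ are the coefficients (up to normalization) of the generating function for the Catalan-type counts $|\TT_k|$, and $\sum_k |\TT_k|\,|\TT_{n-k}|\binom{n}{k}=|\TT_n|\cdot(n-1)$-type convolution identities are standard; inserting the polynomial factor $(n-2k)^{2j}$ corresponds to applying the operator $(n-2x\,\partial_x)^{2j}$ — or more conveniently a symmetric difference operator — to a known product of generating functions and reading off coefficients. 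Alternatively one can prove the needed closed forms by induction on $n$ once guessed. I expect the $j=1$ case to be manageable and the $j=2$ case to be the genuinely laborious computation; care with the boundary terms $k=1$ and $k=n-1$ (where $(2k-3)!!$ with $k=1$ must be read as $(-1)!!=1$) will be needed throughout. Once these moments are in hand, everything else is telescoping a first-order-in-$n$ recurrence and routine algebraic simplification to match the stated expressions.
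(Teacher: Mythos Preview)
Your overall strategy --- condition on the root split, use conditional independence of the two pending subtrees, and set up recurrences for $E_U(\QC_n)$ and $E_U(\QC_n^2)$ --- is exactly the paper's approach. The paper packages the setup in a general lemma (their Lemma~\ref{lem:I2}) with split weights $C_{k,n-k}=\tfrac{1}{2}\binom{n}{k}(2k-3)!!\,(2(n-k)-3)!!/(2n-3)!!$.

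There is one genuine mischaracterization, though. The recurrence you obtain is \emph{not} first-order in $n$: it has the form
\[
X_n \;=\; 2\sum_{k=1}^{n-1} C_{k,n-k}\,X_k \;+\; (\text{inhomogeneous term}),
\]
a full-history convolution. For the Yule model the weights are the constant $\tfrac{1}{n-1}$, and subtracting consecutive equations collapses this to a first-order recurrence (that is precisely how the paper handles Theorem~\ref{thm:Y}). Under the uniform model the weights $C_{k,n-k}$ are not constant in $k$, and that trick does not work. The paper instead invokes a dedicated solution lemma (Lemma~\ref{prop:gralsol}, proved in \cite{vardelta}) which solves any such recurrence whose inhomogeneous term lies in the span of $\binom{n}{l}$ and $\alpha_n\binom{n}{l}$, where $\alpha_n=(2n-2)!!/(2n-3)!!$. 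Guess-and-verify within the span you describe would also succeed, but ``telescoping a first-order-in-$n$ recurrence'' will not.

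Second, the inhomogeneous term for $E_U(\QC_n^2)$ needs more than the pure split moments $\langle (n-2k)^{2j}\rangle$. From $\langle s^2 E_U(\QC_k)\rangle$ you will need closed forms for $\sum_k C_{k,n-k}\binom{k}{m}\alpha_k$ (the paper's Lemma~\ref{lem:previsums2}), and from the cross term $\langle E_U(\QC_k)E_U(\QC_{n-k})\rangle$ you will need
\[
\sum_{k=1}^{n-1} C_{k,n-k}\,\binom{k+1}{2}\binom{n-k+1}{2}\,\alpha_k\,\alpha_{n-k}.
\]
The paper evaluates this last sum directly: the double factorials in $C_{k,n-k}$ cancel against those in $\alpha_k\alpha_{n-k}$, leaving a polynomial convolution times a single factor $\alpha_n$. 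This is exactly why $E_U(\QC_n^2)$ carries no $\alpha_n^2$ term, consistent with your remark that the $-\binom{n+1}{2}^2\alpha_n^2$ in the variance comes entirely from $-E_U(\QC_n)^2$. Your generating-function sketch would have to cover these mixed sums as well, not just the $\langle s^{2j}\rangle$ moments; they, together with the recurrence-solving step, are where the actual work lies.
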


Regarding the Yule model, we have the following result. In it, $H_n$ and $H_n^{(2)}$ denote, respectively, the $n$-th \emph{harmonic number} and  \emph{second order harmonic number}:
$$
H_n=\sum_{i=1}^n \frac{1}{i},\quad H_n^{(2)}=\sum_{i=1}^n \frac{1}{i^2}.
$$

\begin{theorem}\label{thm:Y}
For every $n\geq 1$:
\begin{enumerate}[(a)]
\item The expected value of $\QC_n$ under the Yule model is
$$
E_Y(\QC_n)=n(n+1)-2nH_n.
$$

\item The variance of $\QC_n$ under the Yule model is
$$
\sigma_Y^2(\QC_n)=\frac{1}{3} n \big(n^3-8n^2+50n-1-30 H_n-12 n H_n^{(2)}\big).
$$
\end{enumerate}
\end{theorem}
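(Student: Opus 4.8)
The plan is to combine the additive recurrence of Lemma~\ref{lem:rec} with the standard recursive description of the Yule model. Recall (see \cite[\S3.2]{Steel:16}) that if $T$ is a Yule tree on $n\geq 2$ leaves and we single out the maximal pending subtree containing a fixed leaf, the number $k$ of its leaves is uniformly distributed on $\{1,\dots,n-1\}$, and, conditionally on $k$, the two maximal pending subtrees are independent Yule trees on $k$ and $n-k$ leaves. Writing $E_n:=E_Y(\QC_n)$ and using Lemma~\ref{lem:rec} together with the commutativity of $\star$, we get, for $n\geq 2$,
$$
E_n=\frac{1}{n-1}\sum_{k=1}^{n-1}\bigl(E_k+E_{n-k}+(n-2k)^2\bigr)
=\frac{2}{n-1}\sum_{k=1}^{n-1}E_k+\frac{n(n-2)}{3},
$$
where we used $\sum_{k=1}^{n-1}(n-2k)^2=\tfrac13 n(n-1)(n-2)$. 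For part~(a), I would clear the sum by subtracting $(n-2)$ times the recurrence for $E_{n-1}$ from $(n-1)$ times the one for $E_n$; this yields the first-order recurrence $(n-1)E_n=nE_{n-1}+(n-1)(n-2)$, equivalently $E_n/n=E_{n-1}/(n-1)+1-2/n$, which telescopes from $E_1=0$ to $E_n/n=n+1-2H_n$, that is, $E_n=n(n+1)-2nH_n$.

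For part~(b), I would apply the law of total variance, again conditioning on the root split size $k$ (uniform on $\{1,\dots,n-1\}$). By the conditional independence of the two maximal pending subtrees, the conditional variance of $\QC_n$ given $k$ is $\sigma_Y^2(\QC_k)+\sigma_Y^2(\QC_{n-k})$, while its conditional expectation is $g(k):=E_k+E_{n-k}+(n-2k)^2$; note that $\frac1{n-1}\sum_{k=1}^{n-1}g(k)=E_n$ by the recurrence above. Writing $V_n:=\sigma_Y^2(\QC_n)$, this gives, for $n\geq 2$,
$$
V_n=\frac{2}{n-1}\sum_{k=1}^{n-1}V_k+b_n,\qquad
b_n:=\frac{1}{n-1}\sum_{k=1}^{n-1}g(k)^2-E_n^2 .
$$
The same elimination trick reduces this to $(n-1)V_n=nV_{n-1}+(n-1)b_n-(n-2)b_{n-1}$, and hence, with $V_1=0$, $V_n$ is recovered from the $b_n$'s by telescoping. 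It then only remains to compute $b_n$ in closed form and simplify; matching the result against the claimed formula (or, equivalently, checking that this formula satisfies the recurrence and the base case) completes the proof.

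The crux of the argument, and the step I expect to be the main obstacle, is the explicit evaluation of $\sum_{k=1}^{n-1}g(k)^2$. Expanding $g(k)^2$ with $E_k=k(k+1)-2kH_k$ leaves us with, besides ordinary power sums $\sum_{k=1}^{n-1}k^a$ and $\sum_{k=1}^{n-1}(n-2k)^{2a}$ (which are polynomials in $n$), the harmonic-weighted sums $\sum_{k=1}^{n-1}k^aH_k$ and $\sum_{k=1}^{n-1}k^aH_k^2$ for $a\leq 3$, and --- the delicate ones --- the convolution sums $\sum_{k=1}^{n-1}k^aH_kH_{n-k}$. The weighted sums are handled by Abel summation starting from $\sum_{k=1}^{m}H_k=(m+1)H_m-m$ and its analogues, whereas the convolutions require identities such as
$$
\sum_{k=1}^{n-1}H_kH_{n-k}=(n+1)\bigl(H_n^2-H_n^{(2)}\bigr)-2n(H_n-1),
$$
together with partial-fraction relations like $\sum_{k=1}^{n-1}\tfrac1{k(n-k)}=\tfrac2n H_{n-1}$; it is precisely through these convolution identities that the second-order harmonic number $H_n^{(2)}$ enters the final expression for the variance. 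Once $b_n$ has been written as a polynomial in $n$ plus explicit rational multiples of $H_n$ and $H_n^{(2)}$, the remaining telescoping and algebraic simplification, though somewhat lengthy, are routine.
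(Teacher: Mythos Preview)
Your argument is correct and, for part~(a), identical to the paper's: the same recurrence $E_n=\frac{2}{n-1}\sum_{k=1}^{n-1}E_k+\frac{n(n-2)}{3}$, the same elimination to a first-order recurrence, and the same telescoping.

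For part~(b) the overall strategy coincides with the paper's---both reduce to a recurrence of the shape $X_n=\frac{2}{n-1}\sum_{k=1}^{n-1}X_k+c_n$, eliminate the cumulative sum, and telescope using the harmonic identities $\sum_{k}\binom{k}{m}H_k$, $\sum_k H_k^2$, $\sum_k H_k^{(2)}$ together with the convolution sums $\sum_k k^a H_k H_{n-k}$---but the packaging differs. The paper takes $X_n=E_Y(\QC_n^2)$ (via Lemma~\ref{lem:IY}.(b)) and only subtracts $E_Y(\QC_n)^2$ at the very end, so its independent term $T_n$ does \emph{not} contain $\sum_{k}E_k^2$; your law-of-total-variance route takes $X_n=V_n$, so your $b_n=\frac{1}{n-1}\sum_k g(k)^2-E_n^2$ picks up an extra $\frac{2}{n-1}\sum_k E_k^2$ and hence an additional family of sums $\sum_k k^a H_k^2$. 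This is a genuine but minor trade-off: you avoid carrying $E_Y(\QC_n^2)$ as an auxiliary quantity, at the price of one more standard harmonic identity in the evaluation of the independent term. Either organization leads to the same final expression with comparable effort.
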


We prove these theorems in the Appendix at the end of the paper.\smallskip

Using Stirling's approximation for large factorials it is easy to prove that
$$
\frac{(2n-2)!!}{(2n-3)!!}\sim \sqrt{\pi n}
$$
(see, for instance, \cite[Rem. 2]{vardelta}).
Moreover, it is known (see, for instance,  \cite{Knuth2}) that
$$
H_n\sim \ln(n),\quad  
H_n^{(2)}\sim \frac{\pi^2}{6}.
$$
Then, from the last two theorems we obtain the following limit behaviours:
$$
\begin{array}{ll}
\displaystyle E_U(\QC_n)\sim \frac{\sqrt{\pi}}{2} n^{5/2} &\displaystyle 
 \sigma_U(\QC_n) \sim \sqrt{\frac{14}{15}} n^{5/2}  \\[2ex]
 \displaystyle  E_Y(\QC_n)\sim n^2 &\displaystyle 
 \sigma_Y(\QC_n) \sim \frac{1}{\sqrt{3}} n^{2} \\
 \end{array}
$$
So, both under the Yule and the uniform models, the Q-Colless index satisfies that the expected value and the standard deviation grow with $n$ in the same order. This is in contrast with the Colless index, for which it only happens under the uniform model (see \cite{Blum2006a} for details):
$$
\begin{array}{ll}
\displaystyle E_U(C_n)\sim \sqrt{\pi}  n^{3/2} &\displaystyle 
 \sigma_U(C_n) \sim \sqrt{\frac{10-3\pi}{3}} n^{3/2}  \\[2ex]
 \displaystyle  E_Y(C_n)\sim n\log(n) &\displaystyle 
 \sigma_Y(C_n) \sim \sqrt{\frac{18-6\log(2)-\pi^2}{6}} n .\\
 \end{array}
$$

\section{Numerical results}

Since the range of values of the Q-Colless index on $\TT_n^*$ is wider than those of the Colless index $C$, the Sackin index $S$ or the total cophenetic index $\Phi$ (see Table \ref{tab:range}), our intuition told us that the probability of two trees with the same number of leaves having the same Q-Colless index would be smaller than for these other balance indices. To simplify the language, when a balance index $I$ takes the same value on two trees in the same space $\mathcal{T}^*_n$, we call it a \emph{tie}.
Of course, since for $n\geq 12$ the range of possible $\QC$ values is narrower than the number of trees in $\TT_n^*$ (see \cite[Table 3.3]{fel:04} for the cardinality of $\TT_n^*$ for small values of $n$), the pigeonhole principle implies that the Q-Colless index cannot avoid ties for large numbers of leaves.

\begin{table}
    \centering
    \begin{tabular}{r|ll}
    \mbox{Index} & \mbox{Minimum} & \mbox{Maximum}\\ \hline
    $C$     &  $\Theta(n)$ & $\binom{n-1}{2}$\\[1ex]
    $S$     &  $\Theta(n\log(n))$ & $\binom{n+1}{2}-1$\\[1ex]
    $\Phi$  &  $\Theta(n^2)$  & $\binom{n}{3}$\\[1ex]
    $\QC$ &  $\Theta(n)$   &  $\binom{n}{3}+\binom{n-1}{3}$
\end{tabular}
    \caption{Range of values of the Colless index $C$, the Sackin index $S$, the total cophenetic index $\Phi$ and the Q-Colless index $\QC$}
    \label{tab:range}
\end{table}

To check the discriminative power of $\QC$ with respect to $C$, $S$, and $\Phi$, we have computed the probability of tie $p_n(I)$ for these four balance indices $I$ and for number of leaves of $n$ between $n=4$ and $n=20$.

More concretely, first of all, for every balance index $I=C,S,\Phi,\QC$ and for $n=4,\ldots,20$, we have considered all pairs of different trees $(T_1,T_2)$ in $\mathcal{T}^*_n\times \mathcal{T}^*_n$ and we have calculated the number $n_I$ of such pairs of trees such that $I(T_1)=I(T_2)$. Finally, we have computed the probability $p_n(I)$ as $p_n(I)=\frac{n_I}{\binom{|\mathcal{T}^*_n|}{2}}$, where $|\mathcal{T}^*_n|$ is the cardinal of the set $\mathcal{T}^*_n$. 
The results obtained are shown in figure~\ref{SMALL.TIES}. The Q-Colless balance index is the balance index with the least probability of a tie.

In relation with this last point, another way to assess the discriminating skill of an index is to evaluate its power to distinguish between dissimilar trees, and compare it with that of other shape indices. In their paper \cite{hayati}, the authors (whom we thank for their support with the software provided in the article) develop a new resolution function to evaluate the power of tree shape statistics when it comes to discriminate between dissimilar trees (based on the Laplacian matrix of the tree, which allows for less spatial and time complexity in the operations), and then test it together with the usual resolution function based on the NNI metric. Therefore, they are able to rank some balance indices according to their power in discriminating all possible phylogenetic trees on the same number of leaves.

We have performed the same experiment on the same data (which was provided along with \cite{hayati}). It turns out that the $\QC$ performs better than all the other tested indices do, including the Saless index \cite{hayati}, a linear combination of the Sackin and Colless indices which was introduced in the same article and performed best when tested under the NNI metric --- although not with the resolution function proposed in the article, under which it was the Colless index that performed better. We present here the two tables, the first of them computing the score under the NNI distance (bigger values represent more power), and the second one under their proposed resolution function (lower values represent more power).

\begin{table}
\begin{center}
\begin{tabular}{c|cccccccc}
Number of leaves & Colless & Sackin	& Variance  & $I_2$     & $B_1$      & $B_2$    & Saless & Q-Colless  \\ \hline
5                & 1       & 1      & 1         & 1      & 1      & 1      & 1& 1          \\
6                & 0.8157  & 0.8510 & 0.8144    & 0.7611 & 0.7546 & 0.8705 & 0.8315 & 0.8709 \\
7                & 0.9251  & 0.9303 & 0.9023    & 0.8844 & 0.8649 & 0.9254 & 0.9297 & 0.9360 \\
8                & 0.9255  & 0.9122 & 0.8753    & 0.8612 & 0.8326 & 0.9113 & 0.9235 & 0.9218 \\
9                & 0.9184  & 0.9208 & 0.8826    & 0.8539 & 0.8324 & 0.907  & 0.9224 & 0.9302 \\
10               & 0.941   & 0.9380 & 0.8985    & 0.8545 & 0.8326 & 0.9085 & 0.9426 & 0.9475 \\
11               & 0.9531  & 0.9514 & 0.9102    & 0.8552 & 0.8375 & 0.9132 & 0.9551& 0.9604 \\
12               & 0.9533  & 0.9523 & 0.9086    & 0.8504 & 0.8311 & 0.9045 & 0.9556 & 0.9632 \\
13               & 0.9541  & 0.9542 & 0.9078    & 0.8416 & 0.8247 & 0.8992 & 0.9567 & 0.9657 \\
14               & 0.9552  & 0.9548 & 0.9070    & 0.8374 & 0.82   & 0.8902 & 0.9575 & 0.967  \\
15               & 0.9546  & 0.9544 & 0.9049    & 0.8298 & 0.813  & 0.8826 & 0.9569 & 0.9674 \\
16               & 0.9543  & 0.9541 & 0.9034    & 0.8265 & 0.8089 & 0.8743 & 0.9564 & 0.9677 \\
17               & 0.9534  & 0.9534 & 0.9006    & 0.8199 & 0.8024 & 0.8678 & 0.9555 & 0.9679 
\end{tabular}
\caption{Scaled resolution scores for shape indices on the NNI distance matrix. The value of the resolution is between $0$ and $1$. Higher values represent more discriminating power.}
\end{center}
\end{table}

\begin{table}
\begin{center}
\begin{tabular}{c|cccccccc}
Number of leaves & Colless& Sackin & Variance& $I_2$     & $B_1$     & $B_2$     & Q-Colless \\ \hline
7                & 0.0984 & 0.0937 & 0.1082  & 0.1115 & 0.1178 & 0.0989 & 0.0948 \\
8                & 0.0808 & 0.0955 & 0.111   & 0.0893 & 0.1164 & 0.0965 & 0.0941 \\
9                & 0.0507 & 0.0566 & 0.0662  & 0.068  & 0.0797 & 0.0653 & 0.0558 \\
10               & 0.0327 & 0.0379 & 0.0471  & 0.0535 & 0.0629 & 0.0451 & 0.0357 \\
11               & 0.0222 & 0.0255 & 0.0326  & 0.0458 & 0.0511 & 0.0348 & 0.0236 \\
12               & 0.0183 & 0.0217 & 0.0282  & 0.0429 & 0.0473 & 0.0304 & 0.0194 \\
13               & 0.016  & 0.0185 & 0.0238  & 0.0413 & 0.0441 & 0.0283 & 0.0163 \\
14               & 0.0147 & 0.0170 & 0.0217  & 0.04   & 0.0421 & 0.0265 & 0.0147 \\
15               & 0.0137 & 0.0157 & 0.0197  & 0.039  & 0.0404 & 0.0256 & 0.0134 \\
16               & 0.013  & 0.0148 & 0.0184  & 0.038  & 0.0389 & 0.0247 & 0.0126 \\
17               & 0.0123 & 0.014  & 0.017   & 0.037  & 0.0375 & 0.0238 & 0.0118 \\
18               & 0.0117 & 0.0132 & 0.016   & 0.0358 & 0.0361 & 0.0229 & 0.0111 \\
19               & 0.0112 & 0.0127 & 0.015   & 0.0347 & 0.0349 & 0.0222 & 0.0105 \\
20               & 0.0107 & 0.012  & 0.0141  & 0.0339 & 0.0338 & 0.0217 & 0.01 \\
21               & 0.0102 & 0.0114 & 0.0133  & 0.0329 & 0.0327 & 0.0209 & 0.01
\end{tabular}
\caption{Scaled resolution scores for shape indices on the resolution function presented in \cite{hayati}. The value of the resolution is between $0$ and $1$. Lower values represent more discriminating power.}
\end{center}
\end{table}

\begin{figure}
    \centering
    \includegraphics[width=10cm]{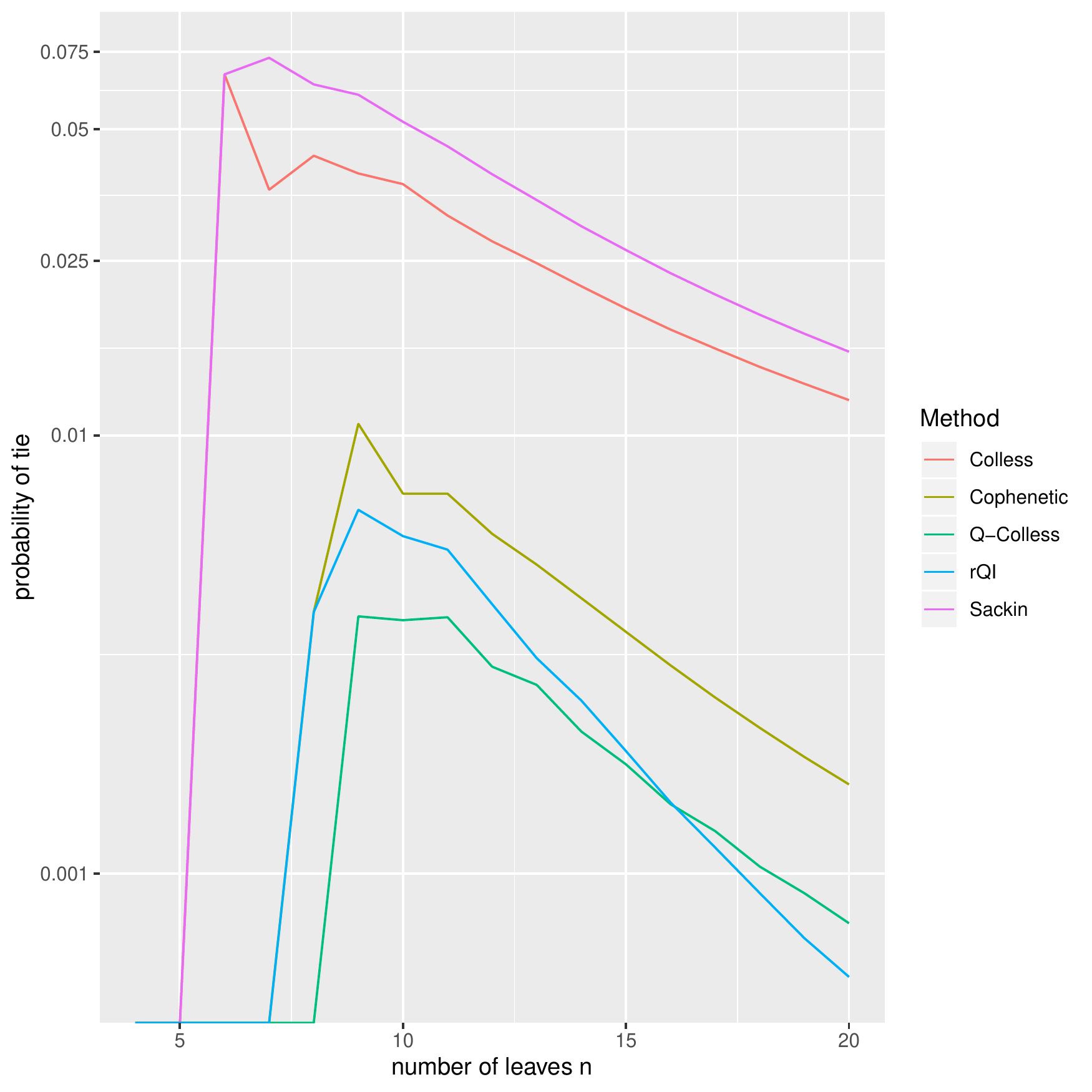}
    \caption{Probability of tie using the Colless, Cophenetic, Quadratic Colless, rQI and Sackin balance indices as function of the trees'number of leaves $n$, for $n=4,\ldots, 20$. }
    \label{SMALL.TIES}
\end{figure}

\section{Conclusions}

The Colless index \citep{Colless82} is one of the oldest and most popular balance indices appearing in the literature. Its number of cites more than doubles that of the second most cited balance index in Google Scholar, the Sackin index. Nevertheless, it presents some drawbacks related to the difficult characterisation of the trees that achieve its minimum value ---which clashes with the intuition that only the maximally balanced trees should be considered the most balanced bifurcating trees--- and the fact that its moments under one of the most widely used probabilistic models for bifurcating phylogenetic trees, the uniform model, are still unknown.

In this paper we have presented an alternative to the Colless index that captures both its intuitive definition and its statistical benefits. In the first part of this manuscript we have proved that its extremal values are attained exactly by the trees that are usually considered to be the ``most'' and ``least'' balanced family of bifurcating trees, respectively. This contrasts vividly with the Colless and Sackin indices, whose minimum value, although being always reached by the maximally balanced trees, is seldom attained only by it; although the Colless index was defined in 1982 \citep{Colless82}, these characterizations have been only very recently found \citep{MinColless, Fischer2018}. We have thus shown that the range of values of the Quadratic Colless index, $O(n^3)$, is bigger than that of the original Colless index, $O(n^2)$, on pair with that of the total cophenetic index.

Then, we have proceeded to the computation of both the expected value and the variance under the Yule and the Uniform models of the Q-Colless index. We want to remark to the reader that the expected value and the variance of the Colless index in its original definition are, under the uniform model, still unknown. So, in this regard the Quadratic Colless index presents an improvement over the original measure of balance.

Finally, we have empirically shown that it possesses more discriminatory power than the original Colless index does by, firstly, computing the probability of producing a tie between a pair of trees for numbers of leaves up to $20$ and, then, testing it under the metrics provided in \cite{hayati}. In both cases, it has systematically been one of the best performing measures, being often superior to the Colless and Sackin indices.

\section*{Acknowledgements}  This work was partially supported by the Spanish Ministry of Economy and Competitiveness and the European Regional Development Fund  through project  PGC2018-096956-B-C43 (MINECO/FEDER).

\section*{Appendices}

\subsection*{A.1 Proof of Theorem \ref{thm:Y}}

The following lemma summarizes Lemma 16 in \cite{MRR} and  Lemma 2 in \cite{CMR13}.

\begin{lemma}\label{lem:IY}
Let $I:\bigcup_{n\geq 1} \TT_n\to \mathbb{R}$ be a mapping satisfying the following two conditions:
\begin{itemize}
\item  It is  invariant under phylogenetic tree isomorphisms  and relabelings of leaves.

\item  There exists a symmetric mapping $f_I:\NN_{n\geq 1}\times \NN_{n\geq 1}\to \RR$ such that, for every pair of phylogenetic trees $T,T'$ on disjoint sets of taxa $X,X'$, respectively,
$$
I(T\star T')=I(T)+I(T')+f_I(|X|,|X'|).
$$
\end{itemize}
For every $n\geq 1$, let $I_n$ and $I_n^2$ be the random variables that choose a tree $T \in \TT_n$ and compute $I(T)$ and $I(T)^2$, respectively.  Then, for every $n\geq 2$, their expected values under the Yule model are:
\begin{align*}
& E_Y(I_n) =
\frac{1}{n-1}\sum_{k=1}^{n-1}\big(2E_Y(I_k)+f_I(k,n-k)\big)\\[1ex]
& E_Y(I_n^2) =\frac{1}{n-1}\sum_{k=1}^{n-1} \Big(2E_Y(I^2_k)  + 4f_I(k,n-k)E_Y(I_k)+2E_Y(I_k)E_Y(I_{n-k})\\
&\qquad\qquad \qquad\quad +f_I(k,n-k)^2\Big).
\end{align*}
\end{lemma}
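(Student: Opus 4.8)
The plan is to peel off the root of a Yule tree and set up a recursion on the number of leaves. The one nontrivial ingredient to put in place first is the classical recursive description of the Yule measure. Recall that a Yule phylogenetic tree on $[n]$ (with $n\geq 2$) is produced by first creating the two children of the root and then, $n-2$ times, choosing one of the currently existing lineages uniformly at random and bifurcating it, the leaf labels being assigned uniformly at the very end. A P\'olya-urn argument (urn started with one ball of each colour, one colour per side of the root) shows that, for every $k\in\{1,\dots,n-1\}$, the probability that the maximal pending subtree on a fixed side of the root ends up with exactly $k$ leaves is $\tfrac{1}{n-1}$; moreover, conditionally on this event the two maximal pending subtrees are \emph{independent}, one being a Yule phylogenetic tree on a uniformly random $k$-subset of $[n]$ and the other a Yule phylogenetic tree on its complement. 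This structure is standard (see \cite[\S 3.2]{Steel:16}) and is what lies behind \cite[Lem.~16]{MRR} and \cite[Lem.~2]{CMR13}. Since $I$ is invariant under isomorphisms and relabelings, the conditional law of $I$ (and of $I^2$) on either side depends only on the number of its leaves.

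With this in hand, I would first establish the formula for $E_Y(I_n)$. Conditioning on the number $k\in\{1,\dots,n-1\}$ of leaves of the maximal pending subtree on the fixed side of the root, and writing $T_k$ and $T'_{n-k}$ for the two conditionally independent Yule subtrees, the hypothesis $I(T_k\star T'_{n-k})=I(T_k)+I(T'_{n-k})+f_I(k,n-k)$ gives
\[
E_Y(I_n)=\sum_{k=1}^{n-1}\frac{1}{n-1}\Big(E_Y(I_k)+E_Y(I_{n-k})+f_I(k,n-k)\Big).
\]
Reindexing the middle summand via $k\mapsto n-k$ turns $\sum_k E_Y(I_{n-k})$ into $\sum_k E_Y(I_k)$, which gives the asserted recurrence $E_Y(I_n)=\tfrac{1}{n-1}\sum_{k=1}^{n-1}\big(2E_Y(I_k)+f_I(k,n-k)\big)$.

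For the second moment I would square the same identity,
\[
I(T_k\star T'_{n-k})^2=I(T_k)^2+I(T'_{n-k})^2+f_I(k,n-k)^2+2I(T_k)I(T'_{n-k})+2f_I(k,n-k)\big(I(T_k)+I(T'_{n-k})\big),
\]
take Yule expectations, and use the conditional independence of $T_k$ and $T'_{n-k}$ to rewrite $E_Y\!\big(I(T_k)I(T'_{n-k})\big)=E_Y(I_k)E_Y(I_{n-k})$. Averaging over $k$ with weights $\tfrac{1}{n-1}$ then yields
\begin{align*}
E_Y(I_n^2)=\sum_{k=1}^{n-1}\frac{1}{n-1}\Big(&E_Y(I_k^2)+E_Y(I_{n-k}^2)+f_I(k,n-k)^2\\
&+2E_Y(I_k)E_Y(I_{n-k})+2f_I(k,n-k)\big(E_Y(I_k)+E_Y(I_{n-k})\big)\Big),
\end{align*}
and a last reindexing $k\mapsto n-k$ in the summands that carry an $I_{n-k}$ (using the symmetry of $f_I$ for the final term) merges $E_Y(I_{n-k}^2)$ with $E_Y(I_k^2)$ and $f_I(k,n-k)E_Y(I_{n-k})$ with $f_I(k,n-k)E_Y(I_k)$, producing the stated expression.

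I expect the only real obstacle to be the first step: a clean justification that the ordered block sizes of the root split of a Yule tree are uniformly distributed on $\{1,\dots,n-1\}$ and that, conditionally on them, the two maximal pending subtrees are independent Yule trees. Everything after that is a short algebraic expansion together with the symmetric reindexing $k\leftrightarrow n-k$. The one bit of bookkeeping to watch is the balanced split $k=n-k$, where the designation of a ``fixed side'' of the root must be made consistently; this is harmless, since all the sums above are already symmetric under $k\mapsto n-k$ and both $I$ and $f_I$ are symmetric in the two maximal pending subtrees.
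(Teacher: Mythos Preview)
Your proposal is correct. The paper does not actually prove this lemma: it merely cites \cite[Lem.~16]{MRR} and \cite[Lem.~2]{CMR13}, so there is no proof in the paper to compare against. Your argument---the uniform root-split of the Yule model via the P\'olya urn, conditional independence of the two maximal pending subtrees, linearity of expectation, and the symmetric reindexing $k\leftrightarrow n-k$---is exactly the standard route taken in those references, and all steps are sound.
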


\begin{claim}\label{thm:EY}
For every $n\geq 1$, the expected value of $\QC_n$ under the Yule model is
$$
E_Y(\QC_n)=n(n+1)-2nH_n.
$$
\end{claim}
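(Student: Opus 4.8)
The plan is to apply Lemma~\ref{lem:IY} with $I=\QC$. First I would verify that $\QC$ satisfies the two hypotheses of that lemma: it is clearly invariant under isomorphisms and relabelings (it depends only on the multiset of balance values), and Lemma~\ref{lem:rec} already tells us that $\QC(T\star T')=\QC(T)+\QC(T')+(|X|-2|X'|)^2$ when $T,T'$ have $|X|,|X'|$ leaves — wait, more precisely, if $T$ has $k$ leaves and $T'$ has $n-k$ leaves then $\QC(T\star T')=\QC(T)+\QC(T')+(n-2k)^2$, and since $(n-2k)^2=((|X|+|X'|)-2|X|)^2=(|X'|-|X|)^2$ this is symmetric in $|X|,|X'|$. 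So we may take $f_{\QC}(a,b)=(a-b)^2$.

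Then I would plug this into the first recurrence of Lemma~\ref{lem:IY}. Writing $e_n:=E_Y(\QC_n)$, we get $e_1=0$ and, for $n\geq 2$,
$$
e_n=\frac{1}{n-1}\sum_{k=1}^{n-1}\bigl(2e_k+(n-2k)^2\bigr).
$$
The next step is to evaluate the inhomogeneous term: $\sum_{k=1}^{n-1}(n-2k)^2$ is an elementary sum (it equals $\frac{n(n-1)(n-2)\cdot 2}{3}$ up to a routine computation — one can also note the summands are $(n-2)^2,(n-4)^2,\dots$). So the recurrence becomes $e_n=\frac{2}{n-1}\sum_{k=1}^{n-1}e_k+g(n)$ for an explicit polynomial $g(n)$ of degree $2$.

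To solve this, I would use the standard trick for ``full-history'' recurrences of this shape: multiply through by $(n-1)$, write the analogous identity for $n$ replaced by $n-1$, and subtract to eliminate the sum. This yields a first-order linear recurrence relating $e_n$ and $e_{n-1}$ with a rational coefficient, whose solution naturally produces the harmonic-number term $H_n$ — the factor $\frac{1}{n}$ or $\frac{n+1}{n-1}$-type coefficients are exactly what generate $H_n$ upon telescoping. Finally I would either guess the closed form $e_n=n(n+1)-2nH_n$ from the shape of the recurrence and verify it by a clean induction (checking the base case $n=1$, where both sides are $0$, and plugging the ansatz into the telescoped first-order recurrence), or derive it directly by summation. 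The induction verification is the cleanest write-up: assuming $e_k=k(k+1)-2kH_k$ for all $k<n$, substitute into $e_n=\frac{1}{n-1}\sum_{k=1}^{n-1}(2e_k+(n-2k)^2)$ and simplify, using the known partial sums $\sum_{k=1}^{n-1}k$, $\sum_{k=1}^{n-1}k^2$, and $\sum_{k=1}^{n-1}kH_k=\binom{n}{2}H_{n-1}-\frac{1}{2}\binom{n-1}{2}$ (or equivalently $\sum_{k=1}^{n-1}kH_k=\binom{n}{2}\bigl(H_n-\tfrac12\bigr)$, adjusting the harmonic index appropriately).

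The main obstacle I expect is purely bookkeeping: correctly handling the $\sum kH_k$ summation and the shift between $H_{n-1}$ and $H_n$ (the identity $H_{n-1}=H_n-\tfrac1n$ must be used carefully), and making sure the degree-$3$ polynomial pieces cancel to leave exactly $n(n+1)-2nH_n$. There is no conceptual difficulty — the result follows mechanically from Lemma~\ref{lem:IY} and Lemma~\ref{lem:rec} — so the proof is essentially an induction with a somewhat delicate algebraic simplification. I would present it as: state $f_{\QC}(a,b)=(a-b)^2$, invoke Lemma~\ref{lem:IY}, and then verify the closed form by induction, relegating the sum evaluations to a line or two of displayed computation.
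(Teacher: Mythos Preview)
Your proposal is correct and follows essentially the same route as the paper: invoke Lemma~\ref{lem:IY} with $f_{\QC}(a,b)=(a-b)^2$, compute the inhomogeneous sum, and then telescope the full-history recurrence into the first-order relation $e_n=\frac{n}{n-1}e_{n-1}+(n-2)$, which the paper solves by setting $X_n=e_n/n$. One small slip: $\sum_{k=1}^{n-1}(n-2k)^2=\frac{n(n-1)(n-2)}{3}$, not twice that, so $g(n)=\frac{n(n-2)}{3}$; otherwise your outline matches the paper's proof exactly (your alternative induction using $\sum_{k=1}^{n-1}kH_k=\binom{n}{2}(H_n-\tfrac12)$ would also work and is only cosmetically different).
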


\begin{proof}
By Lemma \ref{lem:IY}.(a),
\begin{align*}
E_Y(\QC_n) & =\frac{2}{n-1}\sum_{k=1}^{n-1} E_Y(\QC_k)+\frac{1}{n-1}\sum_{k=1}^{n-1} (n-2k)^2\\
& = \frac{2}{n-1}\sum_{k=1}^{n-1} E_Y(\QC_k)+\frac{1}{3}n(n-2)\\
& = \frac{2}{n-1}E_Y(\QC_{n-1})+\frac{n-2}{n-1}\Big(\frac{2}{n-2}\sum_{k=1}^{n-2} E_Y(\QC_k)\Big)+\frac{1}{3}n(n-2)\\
& = \frac{2}{n-1}E_Y(\QC_{n-1})+\frac{n-2}{n-1}\Big(E_Y(\QC_{n-1})-\frac{1}{3}(n-1)(n-3)\Big)\\
&\qquad\qquad+\frac{1}{3}n(n-2)\\
& = \frac{n}{n-1}E_Y(\QC_{n-1})+n-2
\end{align*}
Dividing this equation by $n$ and setting $X_n=E_Y(\QC_n)/n$, we obtain the equation
$$
X_n=X_{n-1}+1-\frac{2}{n} 
$$
whose solution with initial condition $X_1=E_Y(\QC_1)=0$ is
$$
X_n=\sum_{k=2}^n\Big(1-\frac{2}{k} \Big)=n+1-2H_n
$$
and hence, finally,
$$
E_Y(\QC_n)=nX_n=n(n+1)-2nH_n.
$$
\end{proof}

\begin{claim}
For every $n\geq 1$, the variance of $\QC_n$ under the Yule model is
$$
\sigma_Y^2(\QC_n)=\frac{1}{3} n \big(n^3-8n^2+50n-1-30 H_n-12 n H_n^{(2)}\big).
$$
\end{claim}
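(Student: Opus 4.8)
The plan is to parallel the proof of Claim~\ref{thm:EY}, now invoking part~(b) of Lemma~\ref{lem:IY} to obtain a recurrence for the second moment $E_Y(\QC_n^2)$, and then to recover the variance from $\sigma_Y^2(\QC_n)=E_Y(\QC_n^2)-E_Y(\QC_n)^2$ together with the closed form $E_Y(\QC_n)=n(n+1)-2nH_n$ just established.

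Concretely, I would first apply Lemma~\ref{lem:IY}.(b) with $I=\QC$ and $f_I(k,n-k)=(n-2k)^2$ (the extra term in the recurrence of Lemma~\ref{lem:rec}), obtaining
\begin{align*}
E_Y(\QC_n^2)=\frac{1}{n-1}\sum_{k=1}^{n-1}\Big(&2E_Y(\QC_k^2)+4(n-2k)^2E_Y(\QC_k)\\
&{}+2E_Y(\QC_k)E_Y(\QC_{n-k})+(n-2k)^4\Big).
\end{align*}
Substituting $E_Y(\QC_k)=k(k+1)-2kH_k$ everywhere and isolating the term that carries the still-unknown second moments, this reads $E_Y(\QC_n^2)=\frac{2}{n-1}\sum_{k=1}^{n-1}E_Y(\QC_k^2)+h(n)$, where $h(n)$ is an \emph{explicit} function of $n$ assembled from three auxiliary sums: $\sum_{k=1}^{n-1}(n-2k)^4$ (a pure polynomial), $\sum_{k=1}^{n-1}(n-2k)^2\big(k(k+1)-2kH_k\big)$ (a polynomial plus terms of the type $\sum k^jH_k$), and the convolution $\sum_{k=1}^{n-1}\big(k(k+1)-2kH_k\big)\big((n-k)(n-k+1)-2(n-k)H_{n-k}\big)$.

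With $h(n)$ in closed form, I would then eliminate the running sum exactly as in Claim~\ref{thm:EY}: write the same identity at $n-1$, multiply it by $(n-2)/(n-1)$ and subtract, so that $\sum_{k=1}^{n-1}E_Y(\QC_k^2)$ collapses and one is left with a first-order linear recurrence of the shape $E_Y(\QC_n^2)=\frac{n}{n-1}E_Y(\QC_{n-1}^2)+h(n)-\frac{n-2}{n-1}h(n-1)$. Dividing by $n$ and telescoping from the initial value $E_Y(\QC_1^2)=0$ produces $E_Y(\QC_n^2)$ in closed form, and $\sigma_Y^2(\QC_n)=E_Y(\QC_n^2)-\big(n(n+1)-2nH_n\big)^2$ should then simplify to the stated cubic-in-$n$ expression.

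The main obstacle is the evaluation of the auxiliary sums, and above all of the convolution term: expanding it produces pieces such as $\sum_{k=1}^{n-1}k(n-k)H_kH_{n-k}$ and $\sum_{k=1}^{n-1}k(n-k)H_k$, and it is precisely these harmonic-number convolutions that introduce $H_n^{(2)}$ into the final answer; they can be handled by the Chu--Vandermonde--type harmonic-number identities recorded in \cite{WGW}. The one-index sums $\sum_{k\le m}k^jH_k$ are obtained by summation by parts, and the remaining polynomial sums are routine. The algebra is long but mechanical; as a safeguard against sign errors I would check the resulting formula against the degenerate cases $n=1,2,3$, where $\TT_n$ supports a single tree shape so the variance must vanish, which the formula indeed yields.
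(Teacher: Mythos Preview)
Your plan is correct and follows essentially the same route as the paper: apply Lemma~\ref{lem:IY}(b), substitute the closed form for $E_Y(\QC_k)$, evaluate the three auxiliary sums (with the harmonic convolutions handled via \cite{WGW} and the one-index sums $\sum\binom{k}{m}H_k$ via the standard identity), collapse the running sum to a first-order recurrence, divide by $n$ and telescope. The paper additionally uses the identities $\sum_{k=1}^{n-1}H_k/(k+1)=\tfrac12(H_n^2-H_n^{(2)})$, $\sum_{k=1}^{n-1}H_k^2=nH_n^2-(2n+1)H_n+2n$, and $\sum_{k=1}^{n-1}H_k^{(2)}=nH_n^{(2)}-H_n$ during the telescoping step, which you will also need.
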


\begin{proof}
We shall compute the variance $\sigma_Y^2(\QC_n)$
by means of the identity
\begin{equation}
\sigma_Y^2(\QC_n)=E_Y(\QC_n^2)-E_Y(\QC_n)^2
\label{eqn:varY}
\end{equation}
where the value of $E_Y(\QC_n)$ is given by Theorem \ref{thm:EY}. What remains is to compute  $E_Y(\QC_n^2)$. Now, by Lemma \ref{lem:IY}.(b),
\begin{align*}
 E_Y(\QC_n^2)  &=
\frac{1}{n-1}\sum_{k=1}^{n-1}\Big(
2E_Y(\QC_k^2) +   (n-2k)^4 \\
&\qquad  
+4  (n-2k)^2 E_Y(\QC_k) +2  E_Y(\QC_k)  E_Y(\QC_{n-k})
\Big)\\
& =\frac{2}{n-1}\sum_{k=1}^{n-1}E_Y(\QC_k^2) + \frac{1}{n-1}\sum_{k=1}^{n-1}(n-2k)^4 \\ 
&\qquad  
+ \frac{4}{n-1}\sum_{k=1}^{n-1}(n-2k)^2k (k+1-2H_k) 
 \\ 
&\qquad  
 +\frac{2}{n-1}\sum_{k=1}^{n-1}k(n-k)(k+1-2H_k)(n-k+1-2H_{n-k})
\end{align*}
Let us denote by $T_n$ the independent term in this equation, so that this equation can be written as 
\begin{align*}
E_Y(\QC_n^2) & =\frac{2}{n-1}\sum_{k=1}^{n-1}E_Y(\QC_k^2)+T_n\\
& =\frac{2}{n-1}E_Y(\QC_{n-1}^2)+\frac{n-2}{n-1}\cdot \frac{2}{n-2}\sum_{k=1}^{n-2}E_Y(\QC_k^2)+T_n\\
& =\frac{2}{n-1}E_Y(\QC_{n-1}^2)+\frac{n-2}{n-1}(E_Y(\QC_{n-1}^2)-T_{n-1})+T_n\\
& =\frac{n}{n-1}E_Y(\QC_{n-1}^2)+T_n-\frac{n-2}{n-1}T_{n-1}
\end{align*}

Dividing this equation by $n$ and setting $Y_n=E_Y(\QC_n^2)/n$, we obtain the equation
\begin{equation}
Y_n=Y_{n-1}+\frac{1}{n}\Big(T_n-\frac{n-2}{n-1}T_{n-1}\Big).
\label{eqn:EYC22-mareT}
\end{equation}

We want to compute now the independent term in this equation as an explicit expression in $n$. To do that, we first compute  the three sums that form $T_n$. On the one hand,
\begin{equation} 
\frac{1}{n-1}\sum_{k=1}^{n-1}(n-2k)^4=\frac{1}{15}n(n-2)(3 n^2- 6 n-4).
\label{eqn:EYC22-TI1}
\end{equation} 
On the other hand,
\begin{align}
& \frac{4}{n-1}\sum_{k=1}^{n-1}(n-2k)^2k (k+1-2H_k)\nonumber\\
&\qquad =\frac{4}{n-1}\Bigg(\sum_{k=1}^{n-1}(n-2k)^2k (k+1)-2(n-2)^2\sum_{k=1}^{n-1}kH_k\nonumber\\
&\qquad\qquad\qquad +16(n-3)\sum_{k=1}^{n-1}\binom{k}{2}H_k-48\sum_{k=1}^{n-1}\binom{k}{3}H_k\Bigg)\nonumber
\\
&\qquad = \frac{4}{n-1}\Bigg(\frac{1}{15}(n - 1)n(n + 1)(2 n^2 - 5 n + 2)
-2(n-2)^2\binom{n}{2}\Big(H_n-\frac{1}{2}\Big)\nonumber\\
&\qquad\qquad\qquad+16(n-3)\binom{n}{3}\Big(H_n-\frac{1}{3}\Big)-48\binom{n}{4}\Big(H_n-\frac{1}{4}\Big)\Bigg)\nonumber\\
&\qquad = \frac{2}{45}n(n - 2)(12 n^2 + 16 n + 9)-\frac{4}{3} n^2(n - 2)H_n
\label{eqn:EYC22-TI2}
\end{align}
using, in the second last equality above, that 
\begin{equation}
\sum_{k=1}^{n-1} \binom{k}{m}H_k=\binom{n}{m+1}\Big(H_n-\frac{1}{m+1}\Big);
\label{eqn:Knuth}
\end{equation}
see Eqn. (6.70) in \cite{Knuth2}.

As to the third sum,
\begin{align}
& \frac{2}{n-1}\sum_{k=1}^{n-1} k(n-k)(k+1-2H_k)(n-k+1-2H_{n-k})\nonumber\\
&\qquad=\frac{2}{n-1}\Bigg[\sum_{k=1}^{n-1}k(k+1)(n-k)(n-k+1)\nonumber\\
&\qquad\qquad-2\sum_{k=1}^{n-1}k(n-k)(n-k+1)H_k -2\sum_{k=1}^{n-1}k(n-k)(k+1)H_{n-k}\nonumber\\
&\qquad\qquad+4\sum_{k=1}^{n-1}k(n-k)H_kH_{n-k}\Bigg]\nonumber\\
&\qquad=\frac{2}{n-1}\Bigg[\sum_{k=1}^{n-1}k(k+1)(n-k)(n-k+1)-4\sum_{k=1}^{n-1}k(n-k)(n-k+1)H_k\nonumber\\
&\qquad\qquad+4n\sum_{k=1}^{n-1}kH_kH_{n-k}-4\sum_{k=1}^{n-1}k^2H_kH_{n-k}\Bigg]\nonumber\\
&\qquad=\frac{2}{n-1}\Bigg[\sum_{k=1}^{n-1}k(k+1)(n-k)(n-k+1)\nonumber\\
&\qquad\qquad-4\sum_{k=1}^{n-1}\Big(6\binom{k}{3}-4(n-1)\binom{k}{2}+n(n-1)k\Big)H_k\nonumber\\
&\qquad\qquad+4n\sum_{k=1}^{n-1}kH_kH_{n-k}-4\sum_{k=1}^{n-1}k^2H_kH_{n-k}\Bigg]\nonumber\\
&\qquad=\frac{2}{n-1}\Bigg[4\binom{n+3}{5} -24\binom{n}{4}\Big(H_n-\frac{1}{4}\Big)\nonumber\\
&\qquad\qquad+16(n-1)\binom{n}{3}\Big(H_n-\frac{1}{3}\Big)
-4n(n-1)\binom{n}{2}\Big(H_n-\frac{1}{2}\Big)\nonumber\\
&\qquad\qquad+4n\binom{n+1}{2}\big(H_{n+1}^2-H_{n+1}^{(2)}-2H_{n+1}+2\big)\nonumber\\
&\qquad\qquad-\frac{4}{3}\binom{n+1}{2}\Big((2n+1)(H_{n+1}^2-H_{n+1}^{(2)}) -\frac{13n+5}{3}H_{n+1}+\frac{71n+37}{18}\Big)\Bigg]\nonumber\\
&\qquad=\frac{1}{270}n(18 n^3 + 303 n^2 + 1163 n + 98)-\frac{2}{9}n(n+1)(3n+16)H_n\nonumber\\
&\qquad\qquad 
+\frac{4}{3}n(n+1)(H_{n+1}^2-H_{n+1}^{(2)})
\label{eqn:EYC22-TI3}
\end{align}
using, in the second last equality above,  Eqn. (\ref{eqn:Knuth}) and the identities
\begin{align*}
 \sum_{k=1}^{n-1} kH_kH_{n-k} &=\binom{n+1}{2}\big(H_{n+1}^2-H_{n+1}^{(2)}-2H_{n+1}+2\big)\\
 \sum_{k=1}^{n-1} k^2H_kH_{n-k} & =\frac{n(n+1)}{6}\Big[(2n+1)(H_{n+1}^2-H_{n+1}^{(2)})\\
&\quad -\frac{13n+5}{3}H_{n+1}+\frac{71n+37}{18}\Big]
\end{align*}
proved in  \cite{WGW}.

So, 
\begin{align*}
& T_n= \frac{1}{15}n(n-2)(3 n^2- 6 n-4)\\
&\qquad\qquad  +\frac{2}{45}n(n - 2)(12 n^2 + 16 n + 9)-\frac{4}{3} n^2(n - 2)H_n\\
&\qquad\qquad  +\frac{1}{270}n(18 n^3 + 303 n^2 + 1163 n + 98)-\frac{2}{9}n(n+1)(3n+16)H_n\\
&\qquad\qquad 
+\frac{4}{3}n(n+1)(H_{n+1}^2-H_{n+1}^{(2)})\\
&\qquad = \frac{1}{270}n(216 n^3 - 9 n^2 + 1031 n + 26)
-\frac{2}{9}n (9 n^2 + 7 n + 16)H_n\\
&\qquad\qquad 
+\frac{4}{3}n(n+1)(H_{n+1}^2-H_{n+1}^{(2)})
\end{align*}
and, hence, the independent term in Eqn. (\ref{eqn:EYC22-mareT})
is
\begin{align*}
 & \frac{1}{n}\Big(T_n-\frac{n-2}{n-1}T_{n-1}\Big)\\
 &\quad =\frac{1}{n}\Bigg[\frac{1}{270}n(216 n^3 - 9 n^2 + 1031 n + 26)
-\frac{2}{9}n (9 n^2 + 7 n + 16)H_n\\
&\qquad\qquad 
+\frac{4}{3}n(n+1)(H_{n+1}^2-H_{n+1}^{(2)})\\
&\qquad\qquad 
-\frac{n-2}{n-1}\Bigg(\frac{1}{270}(n-1)(216 (n-1)^3 - 9 (n-1)^2 + 1031 (n-1) + 26)\\
&\qquad\qquad 
-\frac{2}{9}(n-1) (9 (n-1)^2 + 7 (n-1) + 16)H_{n-1}\\
&\qquad\qquad 
+\frac{4}{3}(n-1)n(H_{n}^2-H_{n}^{(2)})\Bigg)\Bigg]\\
&\quad =\frac{1}{n}\Bigg[\frac{1}{270}n(216 n^3 - 9 n^2 + 1031 n + 26)\\
&\qquad\qquad 
-\frac{2}{9}n (9 n^2 + 7 n + 16)H_{n-1}-\frac{2}{9}(9 n^2 + 7 n + 16)\\
&\qquad\qquad 
+\frac{4}{3}n(n+1)(H_{n}^2-H_{n}^{(2)})+\frac{8}{3}nH_{n-1}+\frac{8}{3}\\
&\qquad\qquad 
-\frac{1}{270}(n-2)(216 n^3 - 657 n^2 + 1697 n - 1230)\\
&\qquad\qquad 
+\frac{2}{9}(n-2) (9 n^2 - 11 n + 18)H_{n-1}\\
&\qquad\qquad 
-\frac{4}{3}(n-2)n(H_{n}^2-H_{n}^{(2)})\Bigg]\\
&\quad=\frac{1}{n}\Big(\frac{1}{3}(12 n^3 - 28 n^2 + 47 n - 30)-8 (n^2 - n + 1)H_{n-1} \\
&\qquad\qquad
+4n(H_{n}^2-H_{n}^{(2)})\Big)
\\
&\quad=4 n^2 - \frac{28}{3} n + \frac{47}{3} - \frac{10}{n}
-8(n-1)H_{n-1}-\frac{8H_{n-1}}{n}+
4H_{n}^2-4H_{n}^{(2)} 
\end{align*}

 The solution of Eqn. (\ref{eqn:EYC22-mareT}) with initial condition $Y_1=E_Y(QC_1^2)=0$ is
 \begin{align*}
Y_n & =\sum_{k=2}^n \frac{1}{k}\Big(T_k-\frac{k-2}{k-1}T_{k-1}\Big)\\
& = \sum_{k=2}^n\Big(4 k^2 - \frac{28}{3} k + \frac{47}{3} - \frac{10}{k}
-8(k-1)H_{k-1}-\frac{8H_{k-1}}{k}+
4H_{k}^2-4H_{k}^{(2)}\Big)\\
& = \sum_{k=1}^{n-1}\Bigg(4(k+1)^2 - \frac{28}{3}(k+1) + \frac{47}{3} - \frac{10}{k+1}
-8kH_{k}-\frac{8H_{k}}{k+1}+
4H_{k+1}^2-4H_{k+1}^{(2)}\Bigg)\\
& \stackrel{(*)}{=} \frac{1}{3}(4 n^3 - 8 n^2 + 35 n - 31)-10(H_n-1)\\
&\qquad
-8\binom{n}{2}\Big(H_n-\frac{1}{2}\Big)-
4(H_n^2-H_n^{(2)})\\
&\qquad+4\big((n+1)H_{n}^2-(2n+1)H_{n}+2n-1\big)\\
&\qquad
-4\big((n+1)H_{n}^{(2)}-H_{n}-1\big)\\
& = \frac{1}{3}(4 n^3 - 2 n^2 + 53 n - 1)-2 (2 n^2 + 2 n + 5)H_n
+4 n(H_{n}^2-H_{n}^{(2)})
\end{align*}
where, in the second last equality (marked with ($*$)) we have used Eqn. (\ref{eqn:Knuth}) and
the identities
$$
\sum_{k=1}^{n-1} \frac{H_k}{k+1}=\frac{1}{2}(H_n^2-H_n^{(2)})
$$
(cf. Eqn. (6.71) in \cite{Knuth2}) and
\begin{align*}
& \sum_{k=1}^{n-1} H_k^2=nH_{n}^2-(2n+1)H_{n}+2n\\
& \sum_{k=1}^{n-1} H_k^{(2)}=nH_n^{(2)}-H_n
\end{align*}
(see \cite[\S 1.2.7]{Knuth1}).

Therefore, finally
\begin{align*}
& E_Y(\QC_n^2)=nY_n\\
&\quad =\frac{n}{3}(4 n^3 - 2 n^2 + 53 n - 1)-2n (2 n^2 + 2 n + 5)H_n
+4n^2(H_{n}^2-H_{n}^{(2)})
\end{align*}
and 
\begin{align*}
 &\sigma_Y^2(\QC_n)=E_Y(\QC_n^2)-E_Y(\QC_n)^2\\
 &\quad =\frac{1}{3} n \big(n^3-8n^2+50n-1-30 H_n-12 n
   H_n^{(2)}\big)
\end{align*}
as we claimed.
\end{proof}

\subsection*{A.2 Proof of Theorem \ref{thm:U}}

To simplify the notations, for every $n\geq 2$ and for every $1\leq k\leq n-1$, set
$$
C_{k,n-k}\coloneqq \frac{1}{2}\binom{n}{k}\frac{(2k-3)!! (2(n-k)-3)!!}{(2n-3)!!}.
$$

The proof of the following lemma is identical to the proof of Lemma \ref{lem:IY} given in the references provided in the previous subsection, simply replacing the probabilities under the Yule model by  probabilities under the uniform model. We leave the details to the reader.

\begin{lemma}\label{lem:I2}
Let $I:\bigcup_{n\geq 1} \TT_n\to \mathbb{R}$ be a mapping satisfying the same conditions as in the statement of Lemma \ref{lem:IY} and, for every $n\geq 1$, let $I_n$ and $I_n^2$ be the random variables that choose a tree $T \in \TT_n$ and compute $I(T)$ and $I(T)^2$, respectively.  Then, for every $n\geq 2$, their expected values under the uniform model are:
\begin{align*}
& E_U(I_n) =
\sum_{k=1}^{n-1}C_{k,n-k}\big(2E_U(I_k)+f_I(k,n-k)\big)\\[1ex]
& E_U(I_n^2)  =
\sum_{k=1}^{n-1}C_{k,n-k} \left( \vphantom{\sum}
 2E_U(I^2_{k}) +   f_I(k,n-k)^2  
\right. \\ &\quad  \displaystyle  \qquad\qquad\qquad\qquad  \left. 
+4  f_I(k,n-k) E_U(I_{k}) +2  E_U(I_{k})  E_U(I_{n-k}).
\vphantom{\sum}\right) 
\end{align*}
\end{lemma}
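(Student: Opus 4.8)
The plan is to transcribe, essentially word for word, the proof of Lemma~\ref{lem:IY} for the Yule model, replacing only the combinatorial weights that arise when one conditions on the root split. The one structural input needed is the following description of the uniform model. For $n\geq 2$, every $T\in\TT_n$ can be recovered from an ordered triple $(A,S_A,S_B)$, where $\emptyset\neq A\subsetneq[n]$, $S_A\in\TT_A$ and $S_B\in\TT_{[n]\setminus A}$, via $T=S_A\star S_B$; this assignment is exactly $2$-to-$1$ (the only other preimage of $S_A\star S_B$ being $([n]\setminus A,\,S_B,\,S_A)$), so the number of such triples equals $2\,(2n-3)!!$, which in particular records the identity $\sum_{k=1}^{n-1}\binom{n}{k}(2k-3)!!\,(2(n-k)-3)!!=2\,(2n-3)!!$, i.e.\ $\sum_{k=1}^{n-1}C_{k,n-k}=1$. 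Since under the uniform model all trees in $\TT_n$ are equiprobable, averaging $I$ over $\TT_n$ amounts to averaging $\tfrac12\,I(S_A\star S_B)$ over these triples.

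With this in hand, the first identity follows by a direct computation: write $E_U(I_n)=\frac{1}{2(2n-3)!!}\sum_{(A,S_A,S_B)}I(S_A\star S_B)$, use $I(S_A\star S_B)=I(S_A)+I(S_B)+f_I(k,n-k)$ with $k=|A|$, and carry out the inner sums over $S_A\in\TT_A$ and $S_B\in\TT_{[n]\setminus A}$; invariance of $I$ under relabellings gives $\sum_{S_A\in\TT_A}I(S_A)=(2k-3)!!\,E_U(I_k)$ and symmetrically for $S_B$, while $f_I(k,n-k)$ is merely multiplied by the number of pairs $(S_A,S_B)$. After the outer sum over the $\binom{n}{k}$ sets $A$ of size $k$ one obtains
\[
E_U(I_n)=\sum_{k=1}^{n-1}C_{k,n-k}\bigl(E_U(I_k)+E_U(I_{n-k})+f_I(k,n-k)\bigr),
\]
and, because $C_{k,n-k}=C_{n-k,k}$ and $f_I$ is symmetric, the substitution $k\mapsto n-k$ in the middle summand collapses this to $\sum_{k=1}^{n-1}C_{k,n-k}\bigl(2E_U(I_k)+f_I(k,n-k)\bigr)$.

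For the second identity one repeats the argument with $I(S_A\star S_B)^2$ expanded as $I(S_A)^2+I(S_B)^2+f_I(k,n-k)^2+2I(S_A)I(S_B)+2f_I(k,n-k)I(S_A)+2f_I(k,n-k)I(S_B)$ and averages term by term; the only new ingredient is that the sums over $S_A$ and $S_B$ decouple, so the cross term $I(S_A)I(S_B)$ contributes $E_U(I_k)E_U(I_{n-k})$. Merging the $I(S_A)^2$ with the $I(S_B)^2$ contributions, and likewise the two linear terms, via $C_{k,n-k}=C_{n-k,k}$ as before yields exactly
\[
E_U(I_n^2)=\sum_{k=1}^{n-1}C_{k,n-k}\bigl(2E_U(I_k^2)+f_I(k,n-k)^2+4f_I(k,n-k)E_U(I_k)+2E_U(I_k)E_U(I_{n-k})\bigr).
\]

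I do not expect a genuine obstacle: the computation is the Yule argument with the Yule probabilities swapped for the uniform weights. The only point that deserves attention — and the reason the paper relegates the details to the reader — is the bookkeeping of the root split, namely ordered triples versus unordered pending subtrees, whence the factor $\tfrac12$ hidden in $C_{k,n-k}$, together with the elementary double-factorial identity above that makes the $C_{k,n-k}$ a genuine probability distribution; once these are checked, everything else is formal.
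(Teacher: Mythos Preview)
Your proposal is correct and follows precisely the approach the paper indicates: the paper does not give a proof but simply states that it is ``identical to the proof of Lemma~\ref{lem:IY} given in the references provided in the previous subsection, simply replacing the probabilities under the Yule model by probabilities under the uniform model'' and leaves the details to the reader. You have supplied exactly those details, correctly handling the $2$-to-$1$ correspondence between ordered root splits and trees that produces the factor $\tfrac12$ in $C_{k,n-k}$.
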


In the proofs provided in this subsection we shall use the following technical lemmas. They are proved in the Section SN-4 of the Supplementary Material of \cite{vardelta}; Lemma \ref{prop:gralsol} is Proposition 6 in that paper.

\begin{lemma}\label{lem:previsums1}
For every $n\geq 2$:
\begin{enumerate}[(a)]
\item $\displaystyle \sum_{k=1}^{n-1} C_{k,n-k}=1$
\item For every $m\geq 1$,
$$
\sum_{k=1}^{n-1} C_{k,n-k} \binom{k}{m}=\frac{1}{2}\binom{n}{m}\Big(1-\frac{m-1}{n-1}\cdot\frac{(2m-3)!!}{(2m-2)!!}\cdot \frac{(2n-2)!!}{(2n-3)!!}\Big).
$$
\end{enumerate}
\end{lemma}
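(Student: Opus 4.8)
The plan is to prove Lemma \ref{lem:previsums1} by reducing both parts to the basic identity (a), using well-known double-factorial manipulations. First I would establish part (a). The quantities $C_{k,n-k}$ are, up to the factor $1/2$ and the binomial, exactly the uniform-model probabilities that a random tree in $\TT_n$ splits into pending subtrees of sizes $k$ and $n-k$ (with the factor $1/2$ compensating the commutativity $T_k\star T_{n-k}=T_{n-k}\star T_k$); hence $\sum_{k=1}^{n-1}C_{k,n-k}=1$ is just the statement that these split-size probabilities sum to $1$. Concretely, one checks $\sum_{k=1}^{n-1}\binom{n}{k}(2k-3)!!\,(2(n-k)-3)!!=2\,(2n-3)!!$, which is a standard double-factorial convolution; it can be proved by induction on $n$, or by recognizing $(2k-3)!! = \frac{(2k-2)!}{2^{k-1}(k-1)!}$ and reducing to the Vandermonde-type identity $\sum_k\binom{2k-2}{k-1}\binom{2(n-k)-2}{n-k-1}=\binom{2n-2}{n-1}$ times a clean power-of-two bookkeeping; since the excerpt cites \cite[SN-4]{vardelta} for these lemmas, I would simply invoke that reference for the base identity and move on.

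Next I would handle part (b). The idea is to write $\binom{k}{m}C_{k,n-k}$ in a form that telescopes back to an instance of part (a) with a shifted parameter. Using the factorial form $\binom{n}{k}(2k-3)!! = \binom{n}{k}\frac{(2k-2)!}{2^{k-1}(k-1)!}$ and the absorption identity $\binom{k}{m}\binom{n}{k}=\binom{n}{m}\binom{n-m}{k-m}$, the product $\binom{k}{m}C_{k,n-k}$ becomes $\binom{n}{m}$ times a sum over $k$ of a $C$-like expression attached to the pair $(k-m,\,n-k)$ but with the double factorials shifted. The remaining sum over $k$ then splits, via the identity $(2k-3)!! = (2k-2)!!\cdot\frac{(2k-3)!!}{(2k-2)!!}$ and the Pascal-type recursion for double factorials, into a ``full'' part that equals a copy of the sum in (a) (contributing the leading $\tfrac12\binom{n}{m}$) and a ``boundary'' correction that produces the subtracted term $\frac{m-1}{n-1}\cdot\frac{(2m-3)!!}{(2m-2)!!}\cdot\frac{(2n-2)!!}{(2n-3)!!}$. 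The appearance of $\frac{(2n-2)!!}{(2n-3)!!}$ is the tell-tale sign that the convolution identity of part (a) is being applied ``one index off'', where the perfect cancellation of odd/even double factorials fails and leaves exactly this ratio.

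Alternatively, and perhaps more cleanly, I would prove (b) by strong induction on $n$ (for each fixed $m$), using part (a) to handle the $m=1$ base behaviour and the recursions $\binom{k}{m}=\binom{k-1}{m}+\binom{k-1}{m-1}$ and $(2n-3)!!=(2n-3)(2n-5)!!$ to relate the sum for $n$ to sums for $n-1$; the closed form on the right-hand side satisfies the matching recursion, which one verifies by a short algebraic check involving only ratios of consecutive double factorials. Either way, the main obstacle is purely bookkeeping: keeping the shifted double-factorial indices straight so that the ``bulk'' term collapses to exactly $\tfrac12\binom{n}{m}$ and the error term assembles into the stated product $\frac{m-1}{n-1}\cdot\frac{(2m-3)!!}{(2m-2)!!}\cdot\frac{(2n-2)!!}{(2n-3)!!}$ without stray factors of $2$. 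Since the paper explicitly attributes these to \cite{vardelta} and Proposition~6 therein, I would present the reduction to part (a) in detail and then cite that source for the residual computation rather than grinding through it in full here.
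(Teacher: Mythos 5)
The paper does not actually prove this lemma: it is imported verbatim from Section SN-4 of the Supplementary Material of \cite{vardelta}, so there is no internal proof to compare yours against. Judged on its own terms, your argument for part (a) is correct and self-contained: $C_{k,n-k}$ is precisely the probability, under the uniform model on $\TT_n$, that the root splits the leaves into pending subtrees of sizes $k$ and $n-k$ (the factor $\tfrac12$ compensating for each unordered split being counted twice over $k=1,\dots,n-1$, including the case $k=n/2$, where $\binom{n}{n/2}$ itself double-counts bipartitions), so the sum is $1$ by normalization. One concrete error in your aside, though: the identity $\sum_k\binom{2k-2}{k-1}\binom{2(n-k)-2}{n-k-1}=\binom{2n-2}{n-1}$ is false --- the correct convolution of central binomial coefficients is $\sum_{j=0}^{m}\binom{2j}{j}\binom{2(m-j)}{m-j}=4^{m}$, and already at $n=3$ your version gives $4\neq 6$. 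This does not damage part (a), since the probabilistic route stands on its own, but that sentence should be deleted or corrected.

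Part (b) is weaker. The absorption step $\binom{k}{m}\binom{n}{k}=\binom{n}{m}\binom{n-m}{k-m}$ is the right opening move, but after substituting $j=k-m$ you are left with $\sum_{j}\binom{n-m}{j}(2j+2m-3)!!\,(2(n-m-j)-3)!!$, which is \emph{not} an instance of part (a): the first double factorial is offset by $2m$, and there is no one-line ``bulk plus boundary'' split that collapses it. The decomposition you gesture at can be made rigorous, e.g.\ via the generating function $1-\sqrt{1-2x}=\sum_{k\geq1}(2k-3)!!\,x^k/k!$, whose $m$-th derivative $(2m-3)!!\,(1-2x)^{-(2m-1)/2}$ turns the shifted convolution into the difference of a half-integer power (yielding the leading $\tfrac12\binom{n}{m}$) and an integer power (yielding the $\frac{(2n-2)!!}{(2n-3)!!}$ correction); an induction on $m$ or on $n$ would also work. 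But as written your part (b) is a plan rather than a proof, and it ends by citing the same external source the paper cites. That deferral is defensible here precisely because the paper does the same, yet you should be aware that the substantive content of (b) is not established by your write-up.
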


\begin{lemma}\label{lem:previsums2}
For every $n\geq 2$,
\begin{enumerate}[(a)]
\item $\displaystyle \sum_{k=1}^{n-1} C_{k,n-k}\cdot \frac{(2k-2)!!}{(2k-3)!!}=\frac{1}{2}\cdot \frac{(2n-2)!!}{(2n-3)!!}+\frac{1}{4}\big(2H_{2n-2}- H_{n-1}-2\big)$.

\item For every $m\geq 1$,
$$
\sum_{k=1}^{n-1} C_{k,n-k} \binom{k}{m} \frac{(2k-2)!!}{(2k-3)!!}=\frac{1}{2}\binom{n}{m}\Big(\frac{(2n-2)!!}{(2n-3)!!}-\frac{(2m-2)!!}{(2m-3)!!}\Big).
$$
\end{enumerate}
\end{lemma}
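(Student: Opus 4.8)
The plan is to pass from double factorials to Catalan numbers and then read both identities off suitable generating functions. Write $C_r=\frac{1}{r+1}\binom{2r}{r}$ for the $r$-th Catalan number. Using the elementary conversions $(2j-3)!!=\frac{(j-1)!}{2^{j-1}}\binom{2j-2}{j-1}$ and $(2j-2)!!=2^{j-1}(j-1)!$ and cancelling, one finds
$$
C_{k,n-k}=\frac{C_{k-1}\,C_{n-k-1}}{C_{n-1}},\qquad \frac{(2k-2)!!}{(2k-3)!!}=\frac{4^{k-1}}{k\,C_{k-1}},
$$
so that in both statements the factor $C_{k-1}$ coming from $C_{k,n-k}$ cancels against the double-factorial quotient. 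Using in addition $\frac{1}{k}\binom{k}{m}=\frac{1}{m}\binom{k-1}{m-1}$, the left-hand side of (b) collapses to the Catalan convolution
$$
\sum_{k=1}^{n-1}C_{k,n-k}\binom{k}{m}\frac{(2k-2)!!}{(2k-3)!!}=\frac{1}{m\,C_{n-1}}\sum_{j=m-1}^{n-2}\binom{j}{m-1}4^{j}\,C_{n-2-j},
$$
and that of (a) collapses to $\frac{1}{C_{n-1}}\sum_{j=0}^{n-2}\frac{4^{j}}{j+1}\,C_{n-2-j}$.

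Next I would evaluate these convolutions by extracting coefficients from $C(x)=\sum_{r\geq0}C_rx^r=\frac{1-\sqrt{1-4x}}{2x}$. For (b) the identity $\sum_{j\geq m-1}\binom{j}{m-1}(4x)^{j}=\frac{(4x)^{m-1}}{(1-4x)^{m}}$ turns the inner sum into $4^{m-1}[x^{n-m-1}]\frac{C(x)}{(1-4x)^{m}}$; substituting $C(x)=\frac{1-\sqrt{1-4x}}{2x}$ and expanding $(1-4x)^{-m}$ and $(1-4x)^{1/2-m}$ by the generalized binomial theorem gives
$$
\sum_{j=m-1}^{n-2}\binom{j}{m-1}4^{j}C_{n-2-j}=\frac{4^{n-1}}{2}\Big(\binom{n-1}{m-1}-\binom{n-3/2}{\,n-m\,}\Big).
$$
It then remains to verify the clean identity $4^{\,n-m}\binom{n-3/2}{\,n-m\,}=\frac{C_{n-1}}{C_{m-1}}\binom{n}{m}$, which via $\binom{2\ell}{\ell}=4^{\ell}\binom{\ell-1/2}{\ell}$ and the subset-of-a-subset rule $\binom{x}{a}\binom{a}{b}=\binom{x}{b}\binom{x-b}{a-b}$ is just $\frac{\Gamma(n-1/2)}{\Gamma(1/2)\,\Gamma(m)\,\Gamma(n-m+1)}$ written in two ways; feeding it back in, dividing by $m\,C_{n-1}$, and undoing the Catalan substitutions reproduces $\frac12\binom{n}{m}\big(\frac{(2n-2)!!}{(2n-3)!!}-\frac{(2m-2)!!}{(2m-3)!!}\big)$.

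For (a) the same scheme applies, but the generating function is now $\sum_{j\geq0}\frac{4^{j}}{j+1}x^{j}=\frac{-\ln(1-4x)}{4x}$, so the sum equals $\frac{1}{8\,C_{n-1}}[x^{n}]\big({-\ln(1-4x)}+\sqrt{1-4x}\,\ln(1-4x)\big)$. The term $[x^{n}]\big({-\ln(1-4x)}\big)=4^{n}/n$ produces $\tfrac12\cdot\frac{(2n-2)!!}{(2n-3)!!}$, and $[x^{n}]\big(\sqrt{1-4x}\,\ln(1-4x)\big)$ I would compute by differentiating $(1-4x)^{a}$ in $a$ at $a=\tfrac12$ (equivalently by integrating $u^{1/2}\ln u$): this introduces the partial sum $\sum_{i=0}^{n-1}\frac{1}{\,1/2-i\,}=2-2\sum_{i=1}^{n-1}\frac{1}{2i-1}$, and the elementary split of $H_{2n-2}$ into odd and even parts, $\sum_{i=1}^{n-1}\frac{1}{2i-1}=H_{2n-2}-\tfrac12H_{n-1}$, turns it into exactly the claimed $\frac14\big(2H_{2n-2}-H_{n-1}-2\big)$.

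The main obstacle is bookkeeping rather than ideas: in (b) one must push the two generalized binomial coefficients $\binom{n-1}{m-1}$ and $\binom{n-3/2}{n-m}$ through the Gamma-function manipulation without sign or off-by-one slips, and in (a) one must track the logarithmic derivative so that the half-integer partial sums telescope precisely into $H_{2n-2}$ and $H_{n-1}$; checking the closed forms against the direct sums for $n=2,3,4$ is a cheap safeguard. An alternative is induction on $n$ using the Catalan recursion $\sum_{j}C_jC_{n-2-j}=C_{n-1}$ together with $\frac{(2k-2)!!}{(2k-3)!!}=\frac{2k-2}{2k-3}\cdot\frac{(2k-4)!!}{(2k-5)!!}$, but the coefficient-extraction route is shorter.
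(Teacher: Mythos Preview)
Your argument is correct. I checked each step: the Catalan reparametrisation $C_{k,n-k}=C_{k-1}C_{n-k-1}/C_{n-1}$ and $\frac{(2k-2)!!}{(2k-3)!!}=\frac{4^{k-1}}{kC_{k-1}}$ are exactly right, the coefficient extraction from $(1-4x)^{-m}-(1-4x)^{1/2-m}$ gives $\binom{n-1}{m-1}-\binom{n-3/2}{n-m}$ as claimed, the key identity $4^{n-m}\binom{n-3/2}{n-m}=\frac{C_{n-1}}{C_{m-1}}\binom{n}{m}$ reduces to $\frac{2^{n-m}(2n-3)!!}{(2m-3)!!(n-m)!}$ on both sides, and for (a) differentiating $(1-4x)^{a}$ at $a=\tfrac12$ produces $-2C_{n-1}\sum_{i=0}^{n-1}\frac{1}{1/2-i}$, which after the odd/even split of $H_{2n-2}$ lands precisely on $\frac14(2H_{2n-2}-H_{n-1}-2)$.

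The paper does not prove this lemma in its body; it simply invokes the Supplementary Material of \cite{vardelta}. So there is no in-paper proof to compare against, and your generating-function route stands on its own as a clean, self-contained derivation. If anything, your approach is more conceptual than the kind of argument one would expect in that supplement (which, judging from the surrounding lemmas, likely proceeds by direct manipulation of the double-factorial sums or by induction on $n$): once you observe that $C_{k,n-k}$ is a normalised Catalan convolution, both identities become instances of extracting a single coefficient from $C(x)/(1-4x)^{m}$ or $C(x)\cdot\frac{-\ln(1-4x)}{4x}$, and the harmonic numbers in (a) appear for the transparent reason that a logarithm got multiplied by a square root. The only caveat is that your write-up is a sketch; to make it a full proof you would need to spell out the two coefficient extractions and the Gamma/``subset-of-a-subset'' step for $\binom{n-3/2}{n-m}$ explicitly, but there is no missing idea.
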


\begin{lemma}\label{prop:gralsol}
The solution $X_n$ of the equation
$$
X_n=2\sum_{k=1}^{n-1} C_{k,n-k} X_k+\sum_{l=1}^r a_l\binom{n}{l}+\frac{(2n-2)!!}{(2n-3)!!}\sum_{l=1}^s b_l\binom{n}{l}
$$
with given initial condition $X_1$ is
$$
X_n=\sum_{l=1}^{s+1} \widehat{a}_l\binom{n}{l}+\frac{(2n-2)!!}{(2n-3)!!}\sum_{l=1}^r \widehat{b}_l\binom{n}{l}
$$
with
\begin{align*}
& \widehat{a}_1= X_1-a_1\\
& \widehat{a}_{l}   =\frac{l\cdot (2l-2)!!}{(2l-3)!!}\Big(\frac{b_{l}}{l} +\frac{b_{l-1}}{l-1}\Big),\quad l=2,\ldots,s\\
&\widehat{a}_{s+1}   =\frac{(s+1)\cdot (2s)!!}{s\cdot (2s-1)!!}\cdot b_s\\
&\widehat{b}_l=\frac{(2l-3)!!}{(2l-2)!!}\cdot a_l,\quad l=1,\ldots,r
\end{align*}
\end{lemma}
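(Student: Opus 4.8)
The plan is to prove the lemma by direct verification. Read for each $n\geq 2$, the displayed recurrence expresses $X_n$ in terms of $X_1,\ldots,X_{n-1}$ and the data $a_1,\ldots,a_r,b_1,\ldots,b_s$; hence, once $X_1$ is prescribed, the sequence $(X_n)_{n\geq 1}$ is uniquely determined, and it suffices to check that the explicit formula proposed for $X_n$ satisfies the initial condition at $n=1$ and the recurrence for $n\geq 2$. (If one prefers to \emph{derive} the formula rather than guess it, one uses instead that the averaging map $(X_k)_k\mapsto\bigl(2\sum_{k=1}^{n-1}C_{k,n-k}X_k\bigr)_{n}$ is linear, solves the recurrence separately for each elementary inhomogeneity of the two types $\binom{n}{l}$ and $\frac{(2n-2)!!}{(2n-3)!!}\binom{n}{l}$ and for the initial condition, and superposes; but the verification route is shorter to write down.)

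Throughout, write $h(n)=\frac{(2n-2)!!}{(2n-3)!!}$, so that the claimed solution is $X_n=\sum_{l=1}^{s+1}\widehat{a}_l\binom{n}{l}+h(n)\sum_{l=1}^{r}\widehat{b}_l\binom{n}{l}$ and the inhomogeneity is $g_n=\sum_{l=1}^{r}a_l\binom{n}{l}+h(n)\sum_{l=1}^{s}b_l\binom{n}{l}$. The engine of the verification is the pair of ``action formulas'' supplied by Lemmas~\ref{lem:previsums1}(b) and~\ref{lem:previsums2}(b), which in this notation (with $m=l$) read
\begin{align*}
2\sum_{k=1}^{n-1}C_{k,n-k}\binom{k}{l}&=\binom{n}{l}-\frac{l-1}{n-1}\cdot\frac{1}{h(l)}\cdot h(n)\binom{n}{l},\\
2\sum_{k=1}^{n-1}C_{k,n-k}\,h(k)\binom{k}{l}&=h(n)\binom{n}{l}-h(l)\binom{n}{l}.
\end{align*}
These identities are, in effect, dual: the averaging operator sends the ``$h$-weighted'' block $h(\cdot)\binom{\cdot}{l}$ to the pure polynomial block $h(l)\binom{\cdot}{l}$, and sends the pure block $\binom{\cdot}{l}$ to an $h$-weighted block (after the copy of $\binom{n}{l}$ cancels against the one already present in $X_n$), but at a \emph{shifted index}, owing to the factor $\frac{l-1}{n-1}$.

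Substituting the formula for $X_k$ into $2\sum_{k=1}^{n-1}C_{k,n-k}X_k$ and applying the two action formulas yields
\[
X_n-2\sum_{k=1}^{n-1}C_{k,n-k}X_k=\sum_{l=1}^{r}\widehat{b}_l\,h(l)\binom{n}{l}\;+\;h(n)\sum_{l=1}^{s+1}\widehat{a}_l\,\frac{l-1}{n-1}\cdot\frac{1}{h(l)}\binom{n}{l}.
\]
Matching this against $g_n$ on the polynomial and on the $h(n)$-weighted parts separately forces, first, $\widehat{b}_l\,h(l)=a_l$, that is $\widehat{b}_l=\frac{(2l-3)!!}{(2l-2)!!}a_l$, and, second, $\sum_{l=1}^{s+1}\widehat{a}_l\frac{l-1}{n-1}\frac{1}{h(l)}\binom{n}{l}=\sum_{l=1}^{s}b_l\binom{n}{l}$. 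To settle this last identity I would use the elementary facts $\frac{l-1}{n-1}\binom{n}{l}=\frac{n}{l}\binom{n-2}{l-2}$, Pascal's rule $\binom{n-2}{l-2}+\binom{n-2}{l-1}=\binom{n-1}{l-1}$, and $\frac{n}{l}\binom{n-1}{l-1}=\binom{n}{l}$: putting $\frac{1}{h(l)}\widehat{a}_l=b_l+\frac{l}{l-1}b_{l-1}$ for $2\leq l\leq s+1$ with the convention $b_{s+1}=0$ (which is precisely what produces the separate top coefficient $\widehat{a}_{s+1}$), and reindexing $l\mapsto l-1$ in one of the two resulting sums, makes the combination telescope to $\sum_{l=1}^{s}b_l\binom{n}{l}$; note the $l=1$ term drops out because $l-1=0$ there. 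Finally, the initial condition is checked by evaluating the formula at $n=1$: since $\binom{1}{l}=0$ for $l\geq 2$ and $h(1)=1$, it collapses to $X_1=\widehat{a}_1+\widehat{b}_1=\widehat{a}_1+a_1$ (using $\widehat{b}_1=\frac{(-1)!!}{0!!}a_1=a_1$), which pins down $\widehat{a}_1=X_1-a_1$.

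The main obstacle is not conceptual but bookkeeping: one must keep the ratios of double factorials straight (including the conventions $(-1)!!=0!!=1$), carefully track the index shift $l\leftrightarrow l-1$ that the factor $\frac{l-1}{n-1}$ introduces when a pure block is pushed through the averaging operator, and handle the two boundary indices correctly --- $l=1$, which decouples from the $h(n)$-weighted equation and is therefore pinned down by the initial condition rather than by the $b_l$'s, and $l=s+1$, the ``new'' top term created by the shift. Beyond the two summation identities of Lemmas~\ref{lem:previsums1} and~\ref{lem:previsums2}, no further input is needed.
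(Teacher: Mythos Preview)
Your verification is correct. The two action formulas you extract from Lemmas~\ref{lem:previsums1}(b) and~\ref{lem:previsums2}(b) are accurately transcribed, the computation of $X_n-2\sum_{k}C_{k,n-k}X_k$ is right, and the telescoping step for the $h(n)$-weighted part via $\frac{l-1}{n-1}\binom{n}{l}=\frac{n}{l}\binom{n-2}{l-2}$ together with Pascal's rule goes through exactly as you indicate; the boundary cases $l=1$ (killed by the factor $l-1$, hence free for the initial condition) and $l=s+1$ (created by the index shift, with $b_{s+1}:=0$) are handled correctly, as is the evaluation at $n=1$ using $h(1)=1$ and $\widehat{b}_1=a_1$.

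As for comparison: the paper does not actually prove this lemma here; it imports it as Proposition~6 from the Supplementary Material of \cite{vardelta}. Your direct-verification argument is the natural one and is self-contained given Lemmas~\ref{lem:previsums1} and~\ref{lem:previsums2}, so in that sense you have supplied what the paper only cites. One minor stylistic point: you oscillate between the language of verification (``it suffices to check'') and of derivation (``forces''); since you began by invoking uniqueness of the solution, the cleaner phrasing is simply to plug in the stated $\widehat a_l,\widehat b_l$ and confirm both sides agree, without appealing to any separation of the polynomial and $h(n)$-weighted parts as \emph{necessary}.
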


\begin{claim}\label{thm:EU}
For every $n\geq 1$, the expected value  of $\QC_n$ under the uniform model is
$$
E_U(\QC_n)=\binom{n+1}{2}\cdot \frac{(2n-2)!!}{(2n-3)!!}-n(2n-1).
$$
\end{claim}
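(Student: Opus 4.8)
The plan is to apply Lemma \ref{lem:I2}.(a) to the mapping $I=\QC$, for which the recurrence in Lemma \ref{lem:rec} shows that the associated symmetric function is $f_{\QC}(k,n-k)=(n-2k)^2$. Substituting, we get
$$
E_U(\QC_n)=2\sum_{k=1}^{n-1}C_{k,n-k}E_U(\QC_k)+\sum_{k=1}^{n-1}C_{k,n-k}(n-2k)^2 .
$$
The first subtask is to evaluate the inhomogeneous term $\sum_{k=1}^{n-1}C_{k,n-k}(n-2k)^2$ as an explicit function of $n$. Expanding $(n-2k)^2=n^2-4nk+4k^2$ and writing $k$ and $k^2$ as linear combinations of $\binom{k}{1}$ and $\binom{k}{2}$, I would invoke Lemma \ref{lem:previsums1}.(a) (for the constant part) and Lemma \ref{lem:previsums1}.(b) with $m=1,2$ to obtain a closed form; the $m=1$ term contributes nothing extra since $\frac{m-1}{n-1}=0$ there, while the $m=2$ term produces a summand proportional to $\binom{n}{2}\cdot\frac{1}{n-1}\cdot\frac{(2n-2)!!}{(2n-3)!!}$, i.e. linear in $n$ times the double-factorial ratio. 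After collecting terms, the inhomogeneous part should be of the form $\sum_{l}a_l\binom{n}{l}+\frac{(2n-2)!!}{(2n-3)!!}\sum_l b_l\binom{n}{l}$ with only low-degree binomials appearing (degrees up to $3$ in the $a_l$'s, degree $1$ in the $b_l$'s, roughly).

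Once the recurrence is in the normalized shape required by Lemma \ref{prop:gralsol}, I would simply read off the solution. The coefficients $\widehat{a}_l$ and $\widehat{b}_l$ are given explicitly in that lemma in terms of the $a_l,b_l$ and the initial value $X_1=E_U(\QC_1)=0$ (since a one-leaf tree has no internal nodes). This mechanically produces $E_U(\QC_n)$ as a combination of $\binom{n}{l}$'s plus $\frac{(2n-2)!!}{(2n-3)!!}$ times another such combination. The final step is to verify that this combination simplifies to $\binom{n+1}{2}\cdot\frac{(2n-2)!!}{(2n-3)!!}-n(2n-1)$; note $\binom{n+1}{2}=\binom{n}{1}+\binom{n}{2}$ and $n(2n-1)=2\binom{n}{2}+\ldots$ expansions will make the matching routine, and one should double-check small cases ($n=1,2,3$) against direct computation to catch sign or off-by-one errors.

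The main obstacle is bookkeeping rather than conceptual: correctly expressing $(n-2k)^2$ and the intermediate polynomials in the binomial basis $\binom{k}{m}$, keeping track of the $\frac{m-1}{n-1}$ correction factors from Lemma \ref{lem:previsums1}.(b), and then feeding everything through the index-shifting coefficient formulas of Lemma \ref{prop:gralsol} without arithmetic slips. A secondary subtlety is that Lemma \ref{prop:gralsol} is stated for an initial condition at $n=1$ while the recurrence for $E_U(\QC_n)$ is only valid for $n\geq 2$; one must check that plugging $X_1=0$ is consistent, which it is since $E_U(\QC_1)=0$ matches. No genuinely hard step is expected once the technical lemmas from \cite{vardelta} are granted.
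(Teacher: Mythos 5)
Your proposal follows exactly the paper's own argument: apply Lemma \ref{lem:I2}.(a) with $f_{\QC}(k,n-k)=(n-2k)^2$, rewrite the inhomogeneous term in the binomial basis $\binom{k}{m}$ and evaluate it via Lemma \ref{lem:previsums1} (indeed the $m=1$ term has no $\frac{(2n-2)!!}{(2n-3)!!}$ correction and the $m=2$ term yields $-n\cdot\frac{(2n-2)!!}{(2n-3)!!}$), then solve the resulting recurrence with Lemma \ref{prop:gralsol} and initial condition $E_U(\QC_1)=0$. The approach is correct and identical to the paper's, down to the bookkeeping details you flag.
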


\begin{proof}
By Lemma \ref{lem:I2}.(a),
\begin{align*} 
& E_U(\QC_n) =2\sum_{k=1}^{n-1}C_{k,n-k} E_U(\QC_k)+\sum_{k=1}^{n-1}C_{k,n-k}(n-2k)^2\\
&\qquad= 2\sum_{k=1}^{n-1}C_{k,n-k} E_U(\QC_k)+n^2\sum_{k=1}^{n-1}C_{k,n-k}-4(n-1)\sum_{k=1}^{n-1}C_{k,n-k}k\\
&\qquad\qquad +8\sum_{k=1}^{n-1}C_{k,n-k}\binom{k}{2}\\
&\qquad= 2\sum_{k=1}^{n-1}C_{k,n-k} E_U(\QC_k)+n^2-2n(n-1)\\
&\qquad\qquad +4\binom{n}{2}\Big(1-\frac{1}{2(n-1)}\cdot \frac{(2n-2)!!}{(2n-3)!!}\Big)\\
&\qquad= 2\sum_{k=1}^{n-1}C_{k,n-k} E_U(\QC_k)+2\binom{n}{2}+n-n\cdot \frac{(2n-2)!!}{(2n-3)!!}
\end{align*}
where in the second last equality we have used Lemma \ref{lem:previsums1}. Therefore, by Lemma \ref{prop:gralsol} and using that $E_U(\QC_1)=0$, we have that
\begin{align*}
E_U(\QC_n) & =\Big(\binom{n}{2}+n\Big)\frac{(2n-2)!!}{(2n-3)!!}-\Big(4\binom{n}{2}+n\Big)\\
& = \binom{n+1}{2}\cdot \frac{(2n-2)!!}{(2n-3)!!}-n(2n-1)
\end{align*}
as we claimed.
\end{proof}

\begin{claim}
For every $n\geq 1$, the variance of $\QC_n$ under the uniform model is
\begin{align*}
\sigma_U^2(\QC_n) & =\frac{2}{15}(2n-1) (7 n^2+9n-1)\binom{n+1}{2}\\
&\quad
-\frac{1}{8}(5 n^2 + n + 2)\binom{n+1}{2} \frac{(2n-2)!!}{(2n-3)!!} 
-\binom{n+1}{2}^2\left(\frac{(2n-2)!!}{(2n-3)!!}\right)^2
\end{align*}
\end{claim}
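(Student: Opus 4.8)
The plan is to follow the route used for the Yule-model variance: write $\sigma_U^2(\QC_n)=E_U(\QC_n^2)-E_U(\QC_n)^2$, take $E_U(\QC_n)$ from Claim~\ref{thm:EU}, and spend the work on a closed form for $E_U(\QC_n^2)$. By Lemma~\ref{lem:rec}, $\QC$ satisfies the hypotheses of Lemma~\ref{lem:I2} with $f_{\QC}(a,b)=(a-b)^2$, so Lemma~\ref{lem:I2}(b) gives
$$
E_U(\QC_n^2)=2\sum_{k=1}^{n-1}C_{k,n-k}E_U(\QC_k^2)+T_n,
$$
where the driving term $T_n$ is the sum of $\sum_{k}C_{k,n-k}(n-2k)^4$, of $4\sum_{k}C_{k,n-k}(n-2k)^2E_U(\QC_k)$, and of $2\sum_{k}C_{k,n-k}E_U(\QC_k)E_U(\QC_{n-k})$. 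The target is to rewrite $T_n$ in the form $\sum_l a_l\binom{n}{l}+\frac{(2n-2)!!}{(2n-3)!!}\sum_l b_l\binom{n}{l}$ required by Lemma~\ref{prop:gralsol}, and then read off $E_U(\QC_n^2)$ from that lemma with $E_U(\QC_1^2)=0$.

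To evaluate the three sums, expand $(n-2k)^2$ and $(n-2k)^4$ as polynomials in $k$, rewrite every polynomial factor that appears in the falling-binomial basis $\binom{k}{m}$, and substitute $E_U(\QC_k)=\binom{k+1}{2}\frac{(2k-2)!!}{(2k-3)!!}-k(2k-1)$ from Claim~\ref{thm:EU}. The first sum and the $-k(2k-1)$ summands of the others reduce term by term to Lemma~\ref{lem:previsums1} and contribute a polynomial plus $\frac{(2n-2)!!}{(2n-3)!!}$ times a polynomial. The summands carrying a single double-factorial ratio $\frac{(2k-2)!!}{(2k-3)!!}$ reduce to Lemma~\ref{lem:previsums2}(b); crucially, $\binom{k+1}{2}$ vanishes at $k=0$, hence has no $\binom{k}{0}$-component, so only part~(b) of that lemma is ever invoked and never its harmonic-number-bearing part~(a) — this is precisely why the final formula carries no harmonic numbers. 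For the summands of $E_U(\QC_k)E_U(\QC_{n-k})$ in which the ratio sits on the $n-k$ side I would reindex using $C_{k,n-k}=C_{n-k,k}$ and apply the same lemma. The one genuinely new ingredient is the cross term $\binom{k+1}{2}\binom{n-k+1}{2}\frac{(2k-2)!!}{(2k-3)!!}\frac{(2(n-k)-2)!!}{(2(n-k)-3)!!}$ carrying two ratios; with $(2a-2)!!=2^{a-1}(a-1)!$ and $(2a-3)!!=(2a-2)!/(2a-2)!!$ one checks that
$$
2\,C_{k,n-k}\binom{k+1}{2}\binom{n-k+1}{2}\frac{(2k-2)!!}{(2k-3)!!}\frac{(2(n-k)-2)!!}{(2(n-k)-3)!!}=\frac{n}{8}\cdot\frac{(2n-2)!!}{(2n-3)!!}\,(k+1)(n-k+1),
$$
so this part of $T_n$ is an elementary sum over $k$ contributing only a single power of $\frac{(2n-2)!!}{(2n-3)!!}$ times a polynomial in $n$. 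In particular $T_n$, and therefore $E_U(\QC_n^2)$, contains neither harmonic numbers nor a square of $\frac{(2n-2)!!}{(2n-3)!!}$.

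Finally, collecting $T_n$ into canonical form, applying Lemma~\ref{prop:gralsol} to get $E_U(\QC_n^2)$, and forming $\sigma_U^2(\QC_n)=E_U(\QC_n^2)-E_U(\QC_n)^2$: the $\binom{n+1}{2}^2\left(\frac{(2n-2)!!}{(2n-3)!!}\right)^2$ summand comes entirely from $-E_U(\QC_n)^2$ (as $E_U(\QC_n^2)$ has no squared ratio), the terms linear in $\frac{(2n-2)!!}{(2n-3)!!}$ combine with the $2\binom{n+1}{2}n(2n-1)\frac{(2n-2)!!}{(2n-3)!!}$ cross term of $E_U(\QC_n)^2$, and the purely polynomial terms combine with $-n^2(2n-1)^2$, giving after simplification the three summands in the statement. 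I expect the main obstacle to be the bulk and delicacy of this bookkeeping: the many invocations of Lemmas~\ref{lem:previsums1} and~\ref{lem:previsums2} with varying $m$ and $n$-dependent polynomial coefficients, the repeated passages between the $\binom{n}{l}$ basis and ordinary powers of $n$, and the corresponding risk of arithmetic slips — the structural facts that $T_n$ carries no harmonic numbers and no squared double-factorial ratio being the main safeguards.
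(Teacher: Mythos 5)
Your proposal follows essentially the same route as the paper's own proof: the recurrence from Lemma~\ref{lem:I2}(b) with $f_{\QC}(a,b)=(a-b)^2$, reduction of the driving term via Lemmas~\ref{lem:previsums1} and~\ref{lem:previsums2} in the $\binom{k}{m}$ basis, the explicit simplification of the double-ratio cross term (your identity $2C_{k,n-k}\binom{k+1}{2}\binom{n-k+1}{2}\frac{(2k-2)!!}{(2k-3)!!}\frac{(2(n-k)-2)!!}{(2(n-k)-3)!!}=\frac{n}{8}\frac{(2n-2)!!}{(2n-3)!!}(k+1)(n-k+1)$ checks out and matches the paper's intermediate expression), and Lemma~\ref{prop:gralsol} with $E_U(\QC_1^2)=0$. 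Your structural observations --- that part (a) of Lemma~\ref{lem:previsums2} is never needed and that the squared ratio in the variance comes solely from $-E_U(\QC_n)^2$ --- are both correct and are exactly what the paper's bookkeeping bears out.
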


\begin{proof}
To simplify the notations, we shall denote $(2n-2)!!/(2n-3)!!$ by $\AAA_n$. We shall compute the variance $\sigma_U^2(\QC_n)$
by means of the identity
\begin{equation}
\sigma_U^2(\QC_n)=E_U(\QC_n^2)-E_U(\QC_n)^2
\label{eqn:varU}
\end{equation}
where the value of $E_U(\QC_n)$ is given by Theorem \ref{thm:EU}. Now, we must compute  $E_U(\QC_n^2)$. 
By Lemma \ref{lem:I2}.(b),
\begin{align*} 
& E_U(\QC_n^2)  =
\sum_{k=1}^{n-1}C_{k,n-k} \left( \vphantom{\sum} 2E_U(\QC_k^2) +   (n-2k)^4  \right. \\ 
&\qquad\qquad\qquad\qquad  \left. 
+4 (n-2k)^2 E_U(\QC_k) +2  E_U(\QC_k)  E_U(\QC_{n-k})
\vphantom{\sum}\right) \\
& = 2\sum_{k=1}^{n-1}C_{k,n-k}E_U(\QC_k^2)\\
&\qquad +\sum_{k=1}^{n-1}C_{k,n-k}\left[(n-2k)^4+4(n-2k)^2\Big(\binom{k+1}{2}\AAA_k-k(2k-1)\Big)\right.\\
&\qquad\qquad\qquad\qquad+ 2\Big(\binom{k+1}{2}\AAA_k-k(2k-1)\Big)\\
& \qquad\qquad\qquad\qquad\qquad\left.\cdot \Big(\binom{n-k+1}{2}\AAA_{n-k}-(n-k)(2(n-k)-1)\Big)\right]
\\
& = 2\sum_{k=1}^{n-1}C_{k,n-k}E_U(\QC_k^2)
\\
&\qquad +\sum_{k=1}^{n-1}C_{k,n-k}\Big((n-2k)^4-4(n-2k)^2k(2k-1)\\
& \qquad\qquad\qquad\qquad\qquad +2k(2k-1)(n-k)(2(n-k)-1)\Big)\\
&\qquad +\sum_{k=1}^{n-1}C_{k,n-k}\left[4(n-2k)^2\binom{k+1}{2}\AAA_k-2\binom{n-k+1}{2}k(2k-1)\AAA_{n-k}\right.\\
& \qquad\qquad\qquad\qquad\qquad\left.-2\binom{k+1}{2}(n-k)(2(n-k)-1)\AAA_k\right]\\
&\qquad +2\sum_{k=1}^{n-1}C_{k,n-k}\binom{k+1}{2}\binom{n-k+1}{2}\AAA_{k}\AAA_{n-k}\\
& = 2\sum_{k=1}^{n-1}C_{k,n-k}E_U(\QC_k^2)
\\
&\qquad -\sum_{k=1}^{n-1}C_{k,n-k}\Big(8 k^4+16 (n - 1) k^3 - 2 (12n^2-6n - 1) k^2 -  (2 n - 8 n^3)k - n^4\Big)\\
&\qquad +\sum_{k=1}^{n-1}C_{k,n-k}\left[4(n-2k)^2\binom{k+1}{2}-4\binom{k+1}{2}(n-k)(2(n-k)-1)\right]\AAA_k\\
&\qquad +2\sum_{k=1}^{n-1}C_{k,n-k}\binom{k+1}{2}\binom{n-k+1}{2}\AAA_{k}\AAA_{n-k}
\end{align*}
\begin{align} 
& = 2\sum_{k=1}^{n-1}C_{k,n-k}E_U(\QC_k^2)
\nonumber\\
&\qquad -\sum_{k=1}^{n-1}C_{k,n-k}\Bigg[192\binom{k}{4}+96(n+2)\binom{k}{3}-4(12 n^2 - 30 n - 5)\binom{k}{2}\nonumber\\
&\qquad\qquad +(8 n^3 - 24 n^2 + 26 n - 6)k  - n^4 \Bigg]\nonumber\\
&\qquad +\sum_{k=1}^{n-1}C_{k,n-k}\Bigg[96\binom{k}{4}+156\binom{k}{3}-4 (n^2-n-16)\binom{k}{2}\nonumber\\
&\qquad\qquad -4(n^2-n-1) k\Bigg]\AAA_k\nonumber\\
&\qquad +2\sum_{k=1}^{n-1}C_{k,n-k}\binom{k+1}{2}\binom{n-k+1}{2}\AAA_{k}\AAA_{n-k}.\label{eqn:masterEUQC2}
\end{align}

Let us compute the independent term in this equation.
The first sum can be computed using Lemma~\ref{lem:previsums1}:
\begin{align*}
& \sum_{k=1}^{n-1}C_{k,n-k}\Big(192\binom{k}{4}+96(n+2)\binom{k}{3}-4(12 n^2 - 30 n - 5)\binom{k}{2}\\ & \qquad\qquad +(8 n^3 - 24 n^2 + 26 n - 6)k  - n^4 \Big)\\
& \quad =96\binom{n}{4}\Big(1-\frac{3}{n-1}\cdot \frac{5!!}{6!!}\cdot \AAA_n\Big)\\
& \qquad\qquad +48(n+2)\binom{n}{3}\Big(1-\frac{2}{n-1}\cdot \frac{3!!}{4!!}\cdot \AAA_n\Big)\\
& \qquad\qquad -2(12 n^2 - 30 n - 5)\binom{n}{2}\Big(1-\frac{1}{2(n-1)}\cdot\AAA_n\Big)\\
& \qquad\qquad +\frac{1}{2}(8 n^3 - 24 n^2 + 26 n - 6)n-n^4\\
&\quad=(3 n-2) n^3-\frac{n(15n^2-15n+4)}{4}\cdot \AAA_n.
\end{align*}

The second sum in this independent term can be computed using Lemma~\ref{lem:previsums2}:
\begin{align*}
& \sum_{k=1}^{n-1}C_{k,n-k}\left[96\binom{k}{4}+156\binom{k}{3}-4 (n^2-n-16)\binom{k}{2}-4(n^2-n-1) k\right]\AAA_k\\
&\qquad = 48\binom{n}{4}\Big(\AAA_n-\frac{6!!}{5!!}\Big)
+78\binom{n}{3}\Big(\AAA_n-\frac{4!!}{3!!}\Big)\\
&\qquad\qquad -2 (n^2-n-16)\binom{n}{2}(\AAA_n-2)-2(n^2-n-1)n(\AAA_n-1)\\
& \qquad=n^3(n+1)\AAA_n-\frac{2n(33n^3-13n^2-12n+7)}{15}.
\end{align*}

Finally, the third sum in the independent term of this equation can be computed as follows:
\begin{align*}
& 2\sum_{k=1}^{n-1}C_{k,n-k}\binom{k+1}{2}\binom{n-k+1}{2}\frac{(2k-2)!!}{(2k-3)!!}\frac{(2n-2k-2)!!}{(2n-2k-3)!!}\\
&\quad=\sum_{k=1}^{n-1}
\frac{n!(2k-3)!! (2(n-k)-3)!!k(k+1)(n-k)(n-k+1)2^{k-1}(k-1)!2^{n-k-1}(n-k-1)!}{k!(n-k)!(2n-3)!!2^2(2k-3)!!(2(n-k)-3)!!}\\
&\quad=\frac{n!2^{n-4}}{(2n-3)!!}\sum_{k=1}^{n-1}(k+1)(n-k+1)\\
& \quad =\frac{n!2^{n-3}(n-1)(n+1)(n+6)}{(2n-3)!!6}=
\frac{n+6}{8}\cdot \binom{n+1}{3}\cdot\AAA_n.
\end{align*}

So, the independent term of Eqn. (\ref{eqn:masterEUQC2}) is
\begin{align*}
& \frac{n(15n^2-15n+4)}{4}\cdot \AAA_n -(3 n-2) n^3 \\
&\qquad\qquad +n^3(n+1)\AAA_n-\frac{2n(33n^3-13n^2-12n+7)}{15} +
\frac{n+6}{8}\cdot \binom{n+1}{3}\cdot\AAA_n\\
& = \frac{n(49 n^3 + 234 n^2 - 181 n +42)}{48}\cdot\AAA_n
-\frac{n(111 n^3 - 56 n^2 - 24 n + 14)}{15}\\
& = \left(3 n+36 \binom{n}{2}+66 \binom{n}{3}+\frac{49}{2}\binom{n}{4}\right)\alpha_n -3 n-78\binom{n}{2}-244\binom{n}{3}-\frac{888}{5}\binom{n}{4}
\end{align*}
and, hence, Eqn. (\ref{eqn:masterEUQC2}) simplifies to
\begin{align*}
& E_U(\QC_n^2)  = 2\sum_{k=1}^{n-1}C_{k,n-k}E_U(\QC_k^2)-3n-78\binom{n}{2}-244\binom{n}{3}-\frac{888}{5}\binom{n}{4}\\ &\qquad\qquad +\left(3 n+36 \binom{n}{2}+66 \binom{n}{3}+\frac{49}{2}\binom{n}{4}\right)\alpha_n.
\end{align*}
This equation can be solved using Lemma \ref{prop:gralsol} and the fact that $E_U(\QC_1^2)=0$. Its solution is
\begin{align*}
 & E_U(\QC_n^2)  =  3 n+84\binom{n}{2}+320\binom{n}{3}+360\binom{n}{4}+112\binom{n}{5}\\ &\qquad \qquad
-\left(3 n+39\binom{n}{2}+\frac{183}{2}\binom{n}{3}+\frac{111}{2}\binom{n}{4}\right)\alpha_n \\ &\qquad =  \frac{n}{15}(14 n^4 + 85 n^3 - 60 n^2 + 5 n +1)\\ &\qquad \qquad- \frac{n}{16} (37 n^3 + 22 n^2 -13 n +2)\AAA_n.
\end{align*}

Finally,
\begin{align*}
&\sigma_U^2(\QC_n)=E_U(\QC_n^2)-E_U(\QC_n)^2\\
&\quad =\frac{2}{15}(2n-1) (7 n^2+9n-1)\binom{n+1}{2}\\
&\qquad
-\frac{1}{8}(5 n^2 + n + 2)\binom{n+1}{2} \frac{(2n-2)!!}{(2n-3)!!} 
-\binom{n+1}{2}^2\left(\frac{(2n-2)!!}{(2n-3)!!}\right)^2
\end{align*}
as we claimed.
\end{proof}

\end{document}